\crefname{supp}{Supplement}{Supplements}
\setlist{leftmargin=5.5mm}
\tikzstyle{white rectangle}=[fill=white, draw=black, shape=rectangle, tikzit fill=white, tikzit draw=black]
\tikzstyle{invisible rectangle}=[fill=none, draw=none, shape=rectangle]
\tikzstyle{blue text}=[text={rgb,255: red,136; green,164; blue,189}, draw=none, fill=none]
\tikzstyle{image}=[fill={rgb,255: red,234; green,247; blue,255}, draw=black, shape=rectangle, tikzit fill={rgb,255: red,234; green,247; blue,255}, tikzit draw=black]
\tikzstyle{image prime}=[fill={rgb,255: red,184; green,236; blue,255}, draw=black, shape=rectangle, tikzit fill={rgb,255: red,184; green,236; blue,255}, tikzit draw=black]
\tikzstyle{payload}=[fill={rgb,255: red,255; green,247; blue,234}, draw=black, shape=rectangle, tikzit fill={rgb,255: red,255; green,247; blue,234}, tikzit draw=black]
\tikzstyle{e}=[fill={rgb,255: red,230; green,230; blue,255}, draw=black, shape=rectangle, tikzit fill={rgb,255: red,230; green,230; blue,255}, tikzit draw=black]
\tikzstyle{e prime}=[fill={rgb,255: red,185; green,186; blue,255}, draw=black, shape=rectangle, tikzit fill={rgb,255: red,185; green,186; blue,255}, tikzit draw=black]
\tikzstyle{sigma}=[fill={rgb,255: red,255; green,217; blue,164}, draw=black, shape=rectangle, tikzit fill={rgb,255: red,255; green,217; blue,164}, tikzit draw=black]
\tikzstyle{tf split}=[draw=black, shape=rectangle, rectangle split, rectangle split parts=2, rectangle split part fill={{green!20,red!20}}]
\tikzstyle{white gray}=[fill=white, draw=gray, shape=rectangle, minimum width=2cm]
\tikzstyle{gray text}=[fill=none, draw=none, text={rgb,255: red,162; green,162; blue,162}]
\tikzstyle{right arrow}=[->]
\tikzstyle{container edge}=[-, draw={rgb,255: red,136; green,164; blue,189}]
\tikzstyle{double arrow}=[<->]
\tikzstyle{dashed}=[-, dash dot, draw=black, thick]
\tikzstyle{dashed gray}=[-, draw={rgb,255: red,162; green,162; blue,162}, thick, dash dot]
\DeclarePairedDelimiter\abs{\lvert}{\rvert}%
\DeclarePairedDelimiter\norm{\lVert}{\rVert}%
\let\oldabs\abs
\def\abs{\@ifstar{\oldabs}{\oldabs*}}
\let\oldnorm\norm
\def\norm{\@ifstar{\oldnorm}{\oldnorm*}}
\theoremstyle{definition}
\newtheorem{definition}{Definition}[section]
\newtheorem{theorem}{Theorem}[section]
\theoremstyle{remark}
\newtheorem{claim}{Claim}
\newcommand{\CSS}{\mathsf{SIG}}
\newcommand{\Gen}{\mathsf{Generate}}
\newcommand{\Sign}{\mathsf{Sign}}
\newcommand{\Verify}{\mathsf{Verify}}
\newcommand{\REF}{\mathsf{REF}}
\newcommand{\Embed}{\mathsf{Embed}}
\newcommand{\Compare}{\mathsf{Compare}}
\newcommand{\RPWS}{\mathsf{RPWS}}
\newcommand{\Watermark}{\mathsf{Watermark}}
\newcommand{\Detect}{\mathsf{Detect}}
\newcommand{\PGWS}{\mathsf{PGWS}}
\newcommand{\Encode}{\mathsf{Encode}}
\newcommand{\Decode}{\mathsf{Decode}}
\newcommand{\true}{\mathtt{true}}
\newcommand{\false}{\mathtt{false}}
\newcommand{\negl}{\mathsf{negl}}
\newcommand{\bits}{\{0,1\}}
\newcommand{\poly}{\mathsf{poly}}
\title{On the Difficulty of Constructing a Robust and Publicly-Detectable Watermark}
\author[1]{Jaiden Fairoze\thanks{Work completed while interning at Google DeepMind.}}
\author[2]{Guillermo Ortiz-Jiménez}
\author[2]{Mel Vecerik}
\author[3]{Somesh Jha}
\author[2]{Sven Gowal}
\affil[1]{University of California, Berkeley}
\affil[2]{Google DeepMind}
\affil[3]{University of Wisconsin–Madison}
\date{February 7, 2025}
\begin{document}

\maketitle

\begin{abstract}
    This work investigates the theoretical boundaries of creating publicly-detectable schemes to enable the provenance of watermarked imagery.
    Metadata-based approaches like C2PA provide unforgeability and public-detectability.
    ML techniques offer robust retrieval and watermarking.
    However, no existing scheme combines robustness, unforgeability, and public-detectability.
    In this work, we formally define such a scheme and establish its existence.
    Although theoretically possible, we find that at present, it is intractable to build certain components of our scheme without a leap in deep learning capabilities.
    We analyze these limitations and propose research directions that need to be addressed before we can practically realize robust and publicly-verifiable provenance.
\end{abstract}

\section{Introduction}

What online content is trustworthy?
Central to such a question is determining whether a piece of content is authentic.
The challenge is more pressing than ever given the widespread availability of Generative AI (GenAI) technology.
Powerful models from StabilityAI~\citep{rombach2022high}, OpenAI~\citep{achiam2023gpt}, Google DeepMind~\citep{reid2024gemini}, Anthropic~\citep{anthropic2024claude}, Meta~\citep{dubey2024llama} and Midjourney~\citep{midjourney} (among others) are able to produce content that can be difficult to distinguish from human-crafted content, even for experts~\citep{ha2024organic}.
This has led to a range of new provenance issues pertaining to trustworthiness, intellectual property, and accountability.


\textbf{Promising approaches.}
The main pathways to enabling provenance are metadata-based provenance such as the C2PA standard~\citep{c2pa2023coalition}, watermarking such as Steg.ai~\citep{stegai}, Digimarc~\citep{digimarc} or SynthID~\citep{synthid}, retrieval such as Turnitin Similarity~\citep{turnitin}, and ML-based detection (e.g., for synthetic content) such as GPTZero~\citep{gptzero}.
We consider three key properties of schemes for tracking 
provenance: unforgeability, robustness, and public-detectability.

\textit{Unforgeability.}
A provenance scheme is unforgeable if no adversary can produce content traced to a source without knowledge of that source's secret authentication key.
This property is crucial to real-world provenance: content should only be traceable to Alice if Alice enabled traceability with her secret key.
Presently, metadata-based provenance (e.g., C2PA) is the only approach that supports unforgeability due to its use of cryptographic digital signatures~\citep{rivest1978method}.

\textit{Robustness to accidental stripping.} 
Traced content is considered robust if it can be traced even if the content has undergone natural transformations during its lifetime.
In light of widespread GenAI tools, the primary application of provenance tools is to enable online content traceability.
In this setting, robustness is key: content such as text, audio, or images cannot be expected to retain its original form after initial distribution.
Watermarking and retrieval mechanisms currently enable transformation-robust traceability.

\textit{Public-detectability.} 
Public-detectability separates the authentication and verification functionalities.
Entities that hold a secret key can authenticate content such that verification can be performed with a corresponding public key.
The public key can be used by \textit{anyone} to verify that content originated with the secret key holder.
In the special case where robustness is not required, cryptographic digital signatures provide this exact functionality.
Metadata-based provenance is publicly-detectable due to its use of cryptographic signatures, but it is not robust to accidental stripping.

\textbf{This work.} 
In this paper, we study the possibility of uniting the high-security and trustworthiness of cryptographic tools with the powerful robustness of deep learning-based provenance.
In particular, we ask:
\begin{center}
    \textit{Is it possible to design an image watermark that\\(a) preserves the robustness of deep watermarks and\\(b) meets a well-defined notion of unforgeability and public-detectability?}
\end{center}

We analyze the theoretical and practical feasibility of constructing an image watermark with the following properties:
\begin{itemize}
    \item \textit{Cryptographic unforgeability.}
    It should be computationally infeasible to generate adversarial content watermarked with a key that the adversary does not control.
    This should hold even if the adversary has full information minus the secret key.
    
    \item \textit{Robustness to accidental stripping.}
    The watermark should persist even if the image is naturally transformed. This property ignores the adversarial setting and only needs to hold (on average) over naturally-occurring transformations.
    
    \item \textit{Publicly-detectable.} The detection procedure should not contain any secret information---there should be no detriment to publicly releasing it and allowing anyone unlimited access.
    
    \item \textit{Quality-preserving.} Watermarked images should be of similar quality to the original image.
\end{itemize}

While we focus on images, our results apply to any high-entropy data that supports post-hoc watermarking and robust embeddings, e.g., audio and video.

\textbf{Main contributions.} Our core contributions follow:
\begin{enumerate}
    \item In~\Cref{sec:unforge_wm}, we present a watermarking scheme that is provably unforgeable and publicly-detectable, but with limited robustness.
    This scheme is similar to metadata-based provenance but without metadata---the watermarked image is the same size as the input image.
    \item In~\Cref{sec:robust_wm}, we define the requirements for a robust, unforgeable, and publicly-detectable watermark.
    We prove that it can be constructed using cryptographic signatures, post-hoc watermarks, and robust embeddings as building blocks.
    In particular, the resulting public watermark is robust to transformations that underlying post-hoc watermark and robust embedding support.
    \item In~\Cref{sec:barriers}, we study the barriers to deploying our robust scheme.
    We find that state-of-the-art image embedding models are vulnerable to adversarial attacks that can force embeddings to collide.
    Despite this, we observe a weak-but-significant correlation between resistance to adversarial attacks and model performance.
    This suggests that if future models are able to better capture human vision, they may enjoy intrinsic adversarial robustness, thereby enabling robust and publicly-detectable watermarking.
\end{enumerate}

\section{Related Work}
\begin{figure*}[t]
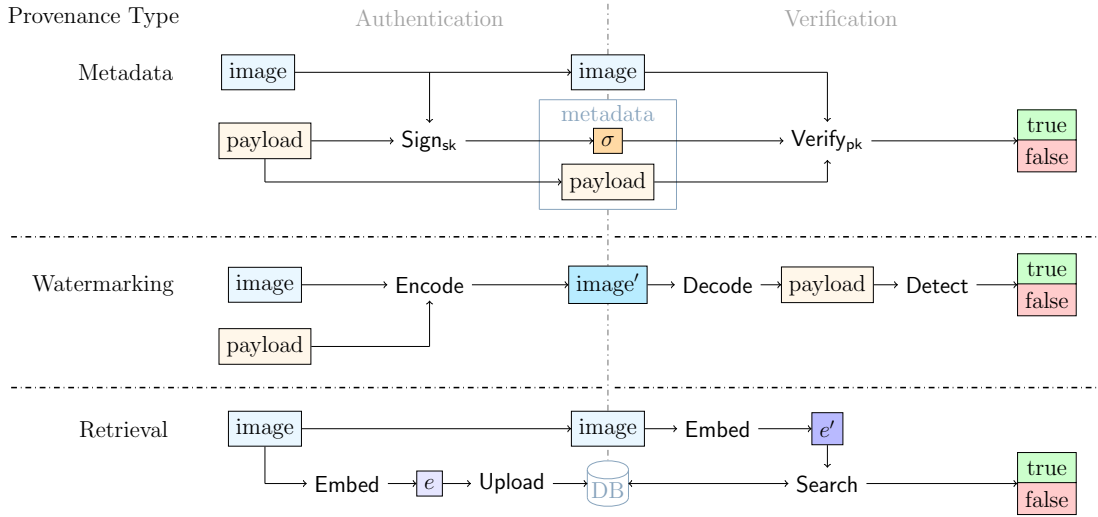

    \centering
    \resizebox{0.9\textwidth}{!}{%
        \tikzfig{TikZ/protocol_comparison}
    }
    \caption{The three main approaches to content provenance. 
    Metadata-based provenance (top) uses an auxiliary manifest to attach a cryptographic signature and other metadata to the image---signature authentication yields provenance.
    Watermarking (middle) encodes a payload with provenance information directly into the image itself, and the payload can be decoded thereafter.
    Retrieval-based detection (bottom) maintains a global store of image embeddings where the store is queried to check if a candidate image is known.}
    \label{fig:misinformation_approaches}
\end{figure*}
We cover essential related work below.
For wider coverage, see Supplement~\ref{app:extended_rw}.
For a graphical overview of the main approaches to content provenance, see~\Cref{fig:misinformation_approaches}.

\textbf{Metadata-based provenance.}
Throughout the paper, we use metadata-based provenance to refer to strategies that attach signatures as additional metadata (See~\Cref{fig:misinformation_approaches}).
The C2PA~\citep{c2pa2023coalition} standard is the current leading approach.
The attached metadata is referred to as a manifest.
This manifest contains cryptographic information attesting to the content's creation, modification, and distribution history.
The manifest is ``hard bound'' to the content using a cryptographic hash---even a one bit change in content would cause its hash to change and detach the manifest from its content.
Both the content and manifest must be preserved in order for verification to succeed.

Metadata-based provenance can, in general, provide a comprehensive record of the content's history.
Due to its use of standardized cryptographic primitives, the system achieves a strong and well-defined notion of security.
Moreover, verifying a C2PA manifest is \textit{fully public}---anyone can verify the authenticity of a C2PA manifest by leveraging existing public-key infrastructure~\citep{laurie2014certificate}.

Conversely, the link between manifest and content is weak: it is trivial to detach the corresponding manifest from any arbitrary content.
Adversarial detachment aside, existing web infrastructure cannot readily support the manifest.
Updating infrastructure to accommodate C2PA manifests is a time-consuming and costly endeavor.
To partially address this problem, the C2PA working group is considering a \textit{soft binding} extension where metadata can be re-attached to content using a perceptual hash computed from the content or a watermark embedded within the digital content.
The group has referenced a candidate algorithm~\citep{iscc2024enhancement}, but any method that enables a ``similarity comparison'' between content is plausible.

\textbf{Watermarking.}
Fundamentally, watermarking schemes aim to hide information within content itself without visibly perturbing the content.\footnote{We do not consider ``visible'' watermarking schemes as they alter content and are easily strippable.}
The encoded information can enable provenance: in practice, the payload is usually a unique identifier for external information retrieval or, if the watermark capacity is large, a data store for origin-related information.

Common among watermarking schemes (see Supplement~\ref{app:extended_rw}) is their optimization to resist content modifications---the watermark payload should not be destroyed if transformed content is ``reasonably similar'' to the original watermarked image.
In addition, the watermark meets a high degree of imperceptibility: to an untrained eye, an image and its watermarked counterpart are of the same quality.
Since the watermark embeds information into the image itself rather than additional metadata, it does not require any web infrastructure changes and can be dropped in to enable provenance immediately.
However, such systems are subject to the following concerns:

 
\textit{Security.} 
There is no guarantee that proprietary algorithms are secure or correct, so end users cannot contextualize detection results.
In the text setting, post-hoc detectors commonly flag human-generated text as AI-generated, directly harming individuals~\citep{gegg2024ai}.

\textit{Utility.} All known industry watermarks only permit trusted users to plant or detect watermarks---the watermark provider cannot release the detector or perform watermarking client-side as it weakens security.

\textbf{Retrieval-based detection.} 
The most straightforward approach to provenance is to maintain a large, continuously-updated database containing every AI-generated image.
This solution is problematic when (a) different model providers do not have unified storage or (b) scalability issues arise once the database reaches a critical size.
Other issues (such as privacy) can be partially ameliorated by using \textit{fingerprints}: instead of storing images directly, a succinct and robust representation of each image (a fingerprint) is stored that preserves the ability to measure similarity.

Similarity comparisons arise in many areas of computer science beyond content fingerprinting~\citep{seo2004robust}, such as perceptual hashing~\citep{indyk1998approximate}, copy detection~\citep{chen2020simple}, and fuzzy matching~\citep{chaudhuri2003robust}.
In this work, we group these techniques under the umbrella of a ``robust embedding.''
In practice, robust embeddings are deployed for explicit material detection.
Examples are detection of non-consensual intimate image abuse~\citep{stopncii}, online terrorism~\citep{saltman2021practical}, and child sexual abuse material (CSAM)~\citep{apple2021csam,prokos2023squint}.

\section{Preliminaries}

We cover basic notation before presenting our schemes.
We also provide a succinct definition of cryptographic signatures as they are used throughout.

\textbf{Notation.}
Let $\lambda$ be the security parameter, i.e., the target level of security.
A system targeting $\lambda$ bits of security should be resistant to any attack that runs in at most $2^\lambda$ steps.
Let $\epsilon$ be a small error tolerance.
We will use $\epsilon$ to capture failure rates for various schemes.
Let $\poly(\cdot)$ refer to any arbitrary polynomial. 
Define $\negl(\lambda)$ to be a function such that for all $\poly(\lambda)$, it holds that $\negl(\lambda) < \frac{1}{\poly(\lambda)}$ for all sufficiently large $\lambda$.
We use superscripts to denote oracle access. For example, $A^O$ denotes that algorithm $A$ has oracle access to oracle $O$.

\textbf{Image transformations.}
Let $\mathcal{T}$ be the set of all possible transformation functions that can apply to an image.
We use $\Gamma(x)$ as the set of all transformations of $x$ and similarly $\Gamma(X)$ as the union of all sets of transformations of each element $\bigcup_{x \in X} \Gamma(x)$.

\subsection{Cryptographic Digital Signatures}
Given a secret key $sk$ and a public key $pk$ from a generation function $\Gen(1^\lambda)$\footnote{The security parameter $\lambda$ is passed as base-1 so that the time complexity of algorithm $\Gen$ is polynomial in the size of the input. For reference, see Chapter 3.1.1 in~\citet{katzlindell}.}, a signature $\sigma$ can be generated from content $x$ using the secret key: $\sigma \gets \Sign(sk, x)$.
This signature is verified by computing $\{\true,\false\} \gets \Verify(pk, x, \sigma)$.
The signature scheme must be \textit{correct} in the sense that honestly-generated signatures must verify with overwhelming probability. That is, for all $x$,
\begin{align*}
    \Pr\left[
    \begin{array}{ccc}
        \begin{matrix}
            \Verify(pk, x, \Sign(sk, x)) = \true: \\
            (sk, pk) \gets \Gen(1^\lambda) \\
        \end{matrix} \\
    \end{array}
    \right] \geq 1 - \negl(\lambda).
\end{align*}
Additionally, the scheme must satisfy a notion of \textit{unforgeability}\footnote{For most definitions, we use a weaker notion where the adversary does not have oracle access to relevant functions---this suffices for our purpose. We present the stronger versions in Supplement~\ref{app:formalism}.}, meaning for any probabilistic polynomial-time (PPT) adversary $\mathcal{A}$ and message $x$,
\begin{align*}
    \Pr\left[
    \begin{array}{ccc}
        \begin{matrix}
            \Verify(pk, x^*, \sigma^*) = \true \\
            \land\ x^* \neq x: \\
            (sk, pk) \gets \Gen(1^\lambda) \\
            \sigma \gets \Sign(sk, x) \\
            (x^*, \sigma^*) \gets \mathcal{A}(pk, x, \sigma) \\
        \end{matrix} \\
    \end{array}
    \right] \leq \negl(\lambda).
\end{align*}
That is, adversary $\mathcal{A}$ accepts its input (the public key $pk$, an honest message $x$, and a digital signature of the message $\sigma$). Its goal is to produce a pair $(x^*, \sigma^*)$ that ``break security'' such that (a) the pair is authentic (i.e., verification checks out) and (b) $x^*$ is a different message to the given $x$. If it succeeds, the adversary has forged a signature on a new message without the secret signing key.

\section{Warmup: An Unforgeable and Publicly-Detectable Watermark}\label{sec:unforge_wm}

Our first goal is to obtain a non-robust but unforgeable and publicly-detectable watermark. We ask:
\begin{center}
    \textit{Is it possible to obtain a scheme analogous to metadata-based provenance for images that does not introduce additional metadata?}
\end{center}

We present a simple scheme that embeds a cryptographic signature within an image $x$ such that there is a natural hash function satisfying $\mathsf{Hash}(x) = \mathsf{Hash}(x')$ where $x'$ is a visually-identical version of $x$ that embeds a cryptographic signature.
The scheme satisfies the equality by encoding signature bits into low-order bits of the image~\citep{muyco2019least}.
Image quality is guaranteed as pixel-values cannot change by more than 1, i.e., the PSNR is at worst $\approx 48.13$.
We note that the hash function is strongly collision-resistant for natural images: the hash of two visually-different images will not be the same.
We define our scheme in~\Cref{fig:unforgeable_scheme}.
\begin{figure*}[t]
\centering
\resizebox{0.9\textwidth}{!}{%
\begin{minipage}[t]{0.33\textwidth}
\begin{algorithmic}
\Function{$\mathsf{Watermark}$}{$sk, x$}
\State $\sigma \gets \Sign(sk, \mathsf{Hash}(x))$
\For{$x_{i, j, c}$ in $x$}
    \State $x'_{i, j, c} \gets 2 \cdot \lfloor\frac{x_{i, j, c}}{2}\rfloor + \sigma_{i, j, c}$
\EndFor
\State \textbf{return} $x'$
\EndFunction
\end{algorithmic}
\end{minipage}
\hfill\vline\hfill
\begin{minipage}[t]{0.33\textwidth}
\begin{algorithmic}
\Function{$\mathsf{Detect}$}{$pk, x$}
\State $h, \sigma \gets \mathsf{Hash}(x), \emptyset$
\For{$x_{i, j, c}$ in $x$}
    \State $\sigma \gets \sigma \parallel x_{i, j, c} \mod 2$
\EndFor
\State \textbf{return} $\Verify(pk, h, \sigma)$
\EndFunction
\end{algorithmic}
\end{minipage}
\hfill\vline\hfill
\begin{minipage}[t]{0.34\textwidth}
\begin{algorithmic}
\Function{$\mathsf{Hash}$}{$x$}
\State $h \gets \emptyset$
\For{$(r, g, b)$ in $x$}
    \State $h \gets h \parallel (\lfloor r / 2\rfloor, \lfloor g / 2\rfloor, \lfloor b / 2\rfloor)$
\EndFor
\State \textbf{return} $h$
\EndFunction
\end{algorithmic}
\end{minipage}
}
\caption{Specification of our unforgeable and publicly-detectable watermark. The keys are generated with the generation function of the signature scheme, $sk, pk \gets \Gen(1^\lambda)$. WLOG, the input image $x$ is RGB-encoded. The $\mathsf{Watermark}$ encodes a signature of the image within the image itself such that the output of $\mathsf{Hash}$ does not change. This is achieved by encoding signature bits in the least significant bit of each color channel value---when the hash is applied (i.e., each value is divided by two and floored), its value must be the same as the plain image.
Thus, $\mathsf{Detect}$ is able to recover both the hash value and signature bits in order to verify the signature.}
\label{fig:unforgeable_scheme}
\end{figure*}

\begin{theorem}[Informal]
The scheme presented in~\Cref{fig:unforgeable_scheme} is correct if the underlying cryptographic signature scheme is correct.
\end{theorem}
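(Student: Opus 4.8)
The plan is to show that on any honestly watermarked image, $\mathsf{Detect}$ reconstructs exactly the hash value and the signature that $\mathsf{Watermark}$ produced, so that correctness reduces directly to correctness of the underlying signature scheme. Concretely, fix an input image $x$ and let $x' = \mathsf{Watermark}(sk, x)$, where $(sk, pk) \gets \Gen(1^\lambda)$ and $\sigma \gets \Sign(sk, \mathsf{Hash}(x))$. The first step is a deterministic, pixelwise argument: for each color-channel value $x_{i,j,c}$ we have $x'_{i,j,c} = 2\lfloor x_{i,j,c}/2 \rfloor + \sigma_{i,j,c}$, so $\lfloor x'_{i,j,c}/2 \rfloor = \lfloor x_{i,j,c}/2 \rfloor$ (since $\sigma_{i,j,c} \in \{0,1\}$ contributes nothing after dividing by two and flooring) and $x'_{i,j,c} \bmod 2 = \sigma_{i,j,c}$. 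Here I would note implicitly the bookkeeping assumption that $\mathsf{Watermark}$ only runs when the signature length matches the number of channel values of $x$ (or is padded/truncated consistently); this is the one place the informal statement is glossing over, and I would state it as a standing convention rather than belabor it.

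The second step is to lift the pixelwise identities to the two computed objects. Summing the floor identities over all pixels and channels in the same iteration order used by $\mathsf{Hash}$ gives $\mathsf{Hash}(x') = \mathsf{Hash}(x)$; call this common value $h$. Summing the mod-2 identities in the iteration order used by the detection loop gives that the bitstring $\sigma$ recovered inside $\mathsf{Detect}(pk, x')$ equals the signature string $\sigma$ that $\mathsf{Watermark}$ embedded. Therefore $\mathsf{Detect}(pk, x')$ returns $\Verify(pk, h, \sigma) = \Verify(pk, \mathsf{Hash}(x), \Sign(sk, \mathsf{Hash}(x)))$.

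The third and final step invokes the correctness guarantee of the signature scheme quoted in the preliminaries, instantiated at the message $\mathsf{Hash}(x)$: over the randomness of $(sk,pk) \gets \Gen(1^\lambda)$ (and any internal randomness of $\Sign$), $\Pr[\Verify(pk, \mathsf{Hash}(x), \Sign(sk, \mathsf{Hash}(x))) = \true] \ge 1 - \negl(\lambda)$. Chaining this with the deterministic equalities from steps one and two yields $\Pr[\mathsf{Detect}(pk, \mathsf{Watermark}(sk,x)) = \true] \ge 1 - \negl(\lambda)$ for every $x$, which is the claimed correctness of the watermarking scheme.

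There is no real analytic obstacle here — the whole argument is a deterministic unwinding of two loops plus one black-box invocation. The only genuinely delicate point is the one I flagged: making precise the length-matching/encoding convention between the signature bitstring and the image's channel values, and checking that the iteration orders in $\mathsf{Watermark}$, $\mathsf{Detect}$, and $\mathsf{Hash}$ are mutually consistent so that "the $\sigma$ embedded" and "the $\sigma$ recovered" are literally the same string in the same order. Once that convention is fixed, steps one through three are routine, and the "informal" qualifier in the theorem is doing exactly the work of suppressing this convention.
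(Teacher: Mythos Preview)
Your proposal is correct and follows essentially the same approach as the paper's proof: both argue that $\Detect$ recovers the same hash (via the floor identity) and the same signature bits (via the mod-2 identity) that $\Watermark$ embedded, then invoke correctness of the underlying signature scheme. Your version is simply more explicit about the pixelwise arithmetic and the iteration-order/length-matching conventions that the paper's brief proof leaves implicit.
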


\begin{proof}
We claim that $\Detect(pk, \Watermark(sk, x))$ holds for all but negligibly few choices of $x$ and $sk, pk \gets \Gen(1^\lambda)$.
First, the underlying signature scheme is correct, meaning $\Verify(pk, h, \Sign(sk, h))$ holds for almost all inputs $h$.
Second, observe that the watermarking algorithm plants each bit of the signature into the lowest order integer bit of each color channel.
At detection time, this bit is extracted by computing $x_{i,j,c} \mod 2$ of each pixel channel.
\end{proof}

\begin{theorem}[Informal]
The scheme presented in~\Cref{fig:unforgeable_scheme} is unforgeable if the underlying cryptographic signature scheme is unforgeable.
\end{theorem}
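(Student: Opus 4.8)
The plan is a standard black-box reduction: from any PPT adversary $\mathcal{A}$ that breaks unforgeability of the watermark, I construct a PPT adversary $\mathcal{B}$ that breaks unforgeability of $\CSS$, with essentially the same success probability. Since $\CSS$ is unforgeable, $\mathcal{A}$'s advantage must then be $\negl(\lambda)$.

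First I would fix the unforgeability game for the watermark, in direct analogy with the signature game of \Cref{sec:unforge_wm}'s preliminaries: $\mathcal{A}$ receives $(pk, x, x')$ with $(sk,pk)\gets\Gen(1^\lambda)$ and $x'\gets\Watermark(sk,x)$, and wins if it outputs $x^*$ with $\Detect(pk,x^*)=\true$ and $\mathsf{Hash}(x^*)\neq\mathsf{Hash}(x)$. Phrasing the winning condition through $\mathsf{Hash}$ --- a visually distinct forgery --- rather than through bit-inequality $x^*\neq x'$ is the right choice: flipping a least-significant bit of $x'$ leaves $\mathsf{Hash}$ unchanged and merely replaces the embedded signature by (at best) another valid signature on the same hash, which is not a forgery of a new message; flipping any higher-order bit changes $\mathsf{Hash}$ and detaches the embedded signature, so $\Detect$ fails.

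Next, the reduction. $\mathcal{B}$ is given a signature challenge $(pk,m,\sigma)$ with $\sigma\gets\Sign(sk,m)$. Because $\mathsf{Hash}$ simply reads off the high-order bits of every color channel, $\mathcal{B}$ can build a canonical image $x$ with $\mathsf{Hash}(x)=m$ (set channel $c$ of pixel $(i,j)$ to $2m_{i,j,c}$), and then form $x'$ by writing the bits of $\sigma$ into the least-significant bits of $x$; this is exactly the output of $\Watermark(sk,x)$, yet $\mathcal{B}$ computes it from $\sigma$ alone, without $sk$. $\mathcal{B}$ then runs $\mathcal{A}(pk,x,x')$, obtains $x^*$, and --- exactly as $\Detect$ does --- sets $h^*=\mathsf{Hash}(x^*)$, reads $\sigma^*$ off the least-significant bits of $x^*$, and outputs $(h^*,\sigma^*)$.

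Finally, the analysis. If $\mathcal{A}$ wins, then $\Detect(pk,x^*)=\true$, which by definition of $\Detect$ means $\Verify(pk,h^*,\sigma^*)=\true$; moreover $\mathsf{Hash}(x^*)\neq\mathsf{Hash}(x)=m$ gives $h^*\neq m$. Hence $(h^*,\sigma^*)$ is a valid $\CSS$-forgery on a fresh message, so $\mathcal{B}$ succeeds whenever $\mathcal{A}$ does, and unforgeability of $\CSS$ forces $\mathcal{A}$'s advantage to be $\negl(\lambda)$. The one point deserving care --- more a matter of getting definitions to line up than a genuine difficulty --- is exactly this alignment of the two games: exploiting surjectivity of $\mathsf{Hash}$ onto bitstrings of the appropriate length so that $\mathcal{B}$ can embed the challenge signature inside a genuinely well-formed watermarked image without $sk$, and choosing the winning condition so that ``new watermarked image'' corresponds precisely to ``new signed message'' $h^*\neq m$, thereby ruling out the vacuous low-bit-flip pseudo-forgeries.
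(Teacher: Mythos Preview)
Your reduction is sound and more carefully spelled out than the paper's own argument, but the two take different routes on one point: whether collision-resistance of $\mathsf{Hash}$ is invoked at all.

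The paper keeps the forgery notion at the level of \emph{visually distinct} images and argues separately that $\mathsf{Hash}$ is collision-resistant on natural images --- any two images differing only within an $\ell_\infty$ ball of radius $1$ are declared ``the same image,'' so visually different natural images must hash differently. Only after that step does a watermark forgery become a signature on a fresh message. You instead bake this bridge directly into the game by adopting $\mathsf{Hash}(x^*)\neq\mathsf{Hash}(x)$ as the winning condition; the reduction to $\CSS$-unforgeability is then immediate and no collision-resistance claim (nor any appeal to what counts as a ``natural image'') is required. Your route is cleaner and fully rigorous; the paper's route ties the guarantee to a more semantic notion of forgery, at the cost of an informal collision-resistance argument.

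One minor wrinkle: you have $\mathcal{B}$ receive $m$ from the signature challenger and then manufacture a preimage $x$ via surjectivity of $\mathsf{Hash}$. But in the paper's weak game the message is universally quantified \emph{outside} the probability, so $\mathcal{B}$ may simply fix the image $x$ on which $\mathcal{A}$ succeeds, set $m=\mathsf{Hash}(x)$, and play the signature game on that $m$; then $\mathcal{B}$ already holds $x$ and surjectivity is never needed. As written, your reduction only directly covers images that are canonical preimages (all low-order bits zero), since those are the only $x$'s $\mathcal{B}$ ever hands to $\mathcal{A}$; orienting the dependency the other way closes this with no extra work.
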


\begin{proof}
To forge a watermark, it must be that either the signature scheme itself is forgeable, or the hash function is not collision-resistant.
It is given that the signature scheme is unforgeable, so it remains to see that the hash function is collision-resistant---for any two natural images, it should be negligibly likely that the hash of the images are the same.
This indeed holds: given any natural image $x \in \{0, \ldots, 255\}^n$, the $\ell_\infty$-norm ball with step $1$ \textit{must} be visually the same image.
In other words, changing all channel values by at most $1$ cannot visually change an image (we focus on images that show clear distinguishable semantic content).
\end{proof}


\section{A Robust and Publicly-Detectable Watermark}\label{sec:robust_wm}

Next, we augment our base scheme to add robustness.
In general, we rely on ML-based tools for robustness and cryptographic signatures for unforgeability and public-detectability.
We first define the ML-based tools, post-hoc watermarks and robust embeddings.

\subsection{Post-Hoc Watermarking} 
We treat post-hoc watermarks as a communication channel where the image is the channel and the watermark payload is the communicated data.
For this purpose, we only require a notion of \textit{correctness}: the decoded payload should be the same payload that was encoded---the worst attack that an adversary can launch is to destroy the payload.
We define post-hoc watermarks to satisfy the following interface: $\Gen(1^\lambda, \mathcal{T}) \to \Encode, \Decode$ is a possibly randomized algorithm that produces two functions, $\Encode(x, m) \to x'$ and $\Decode(x') \to m$, satisfying the following.
Let $x$ be an image and $m \in \bits^c$ be a $c$-length binary message (where $c$ represents the watermark capacity).
Then, for all choices of image $x$, message $m$, and transformation $T \in \mathcal{T}$,
\begin{align*}
    \Pr\left[
    \begin{array}{ccc}
        \begin{matrix}
            \Decode(T(\Encode(x, m))) = m: \\
            \Encode, \Decode \gets \Gen(1^\lambda, \mathcal{T}) \\
        \end{matrix} \\
    \end{array}
    \right] \geq 1 - \epsilon.
\end{align*}
That is, the payload is recovered with high probability.

\subsection{Robust Embedding Functions} 
This primitive captures various forms of similarity search (see Supplement~\ref{subsec:robust_embeddings} for details).
A robust embedding provides a possibly-randomized generation procedure $\Gen(1^\lambda, \mathcal{T}) \to \Embed, \Compare$ that yields two functions, $\Embed$ and $\Compare$, with respect to the set of transformations $\mathcal{T}$.
Given an image $x$, $e \gets \Embed(x)$ produces a succinct embedding $e$.
In order to compare the similarity of two images $x$ and $y$, it suffices to compare them in the embedding space by checking $\{\true,\false\} \gets \Compare(\Embed(x), \Embed(y))$ where the output of $\Compare$ is a binary value representing similarity.
Like other primitives, we require that embedding comparisons are correct: $\Compare$ should output $\true$ when the input embeddings were produced from visually similar images.
For all $x$ and transformations $T \in \mathcal{T}$,
\begin{align*}
    \Pr\left[
    \begin{array}{ccc}
        \begin{matrix}
            \Compare(e_1, e_2) = \true : \\
            \Embed, \Compare \gets \Gen(1^\lambda, \mathcal{T}) \\
            e_1 \gets \Embed(x) \\
            e_2 \gets \Embed(T(x)) \\
        \end{matrix} \\
    \end{array}
    \right] \geq 1 - \epsilon.
\end{align*}
Analogous to a cryptographic hash function, a robust embedding should satisfy a (weakened) notion of collision resistance where collisions are permitted if the input images are valid transformations as determined by the set of possible transformations $\mathcal{T}$.
We define collision resistance for robust embeddings as follows.
For any choice of $x$, it must be that
\begin{align*}
    \Pr\left[
    \begin{array}{ccc}
        \begin{matrix}
            \Compare(\Embed(x), \Embed(x^*)) = \true \\ 
            \land\ x^* \not\in \Gamma(x): \\
            \Embed \gets \Gen(1^\lambda, \mathcal{T}) \\
            e \gets \Embed(x) \\
            x^* \gets \mathcal{A}(\mathcal{T}, x, e) \\
        \end{matrix} \\
    \end{array}
    \right] \leq \epsilon.
\end{align*}

Note that a robust embedding is only useful if the embedding size is much smaller than the image size.
If not, it would be more efficient to compute similarity directly on the images.

\subsection{Construction}

\begin{figure*}[t]
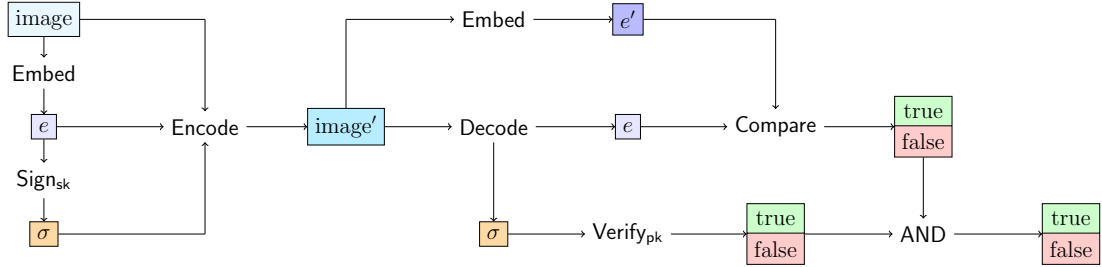

    \centering
    \resizebox{0.9\textwidth}{!}{%
        \tikzfig{TikZ/our_protocol}
    }
    \caption{A robust and publicly-detectable watermark built from a cryptographic signature scheme, a post-hoc watermarking scheme, and a robust embedding model for images. Using a post-hoc watermark, the scheme encodes an embedding of the image along with a signature of the embedding within the image itself. This information can be decoded and verified thereafter.}
    \label{fig:this_work}
\end{figure*}

We define a watermarking scheme that is simultaneously publicly-detectable, unforgeable and robust.
See~\Cref{fig:this_work} for a graphical overview of our method.

\textbf{Approach.} 
From our base scheme, we incrementally build our final scheme with all desired properties.
First, we replace the pixel-level encoding procedure of our initial scheme with a post-hoc watermarking scheme: instead of hiding payload bits in the lower order bits of an image, we use a post-hoc watermark to plant the payload.
To watermark, we compute $x' \gets \Encode(x, \Sign(sk, x))$ and to verify the watermark we compute $b \gets \Verify(pk, x', \Decode(x'))$.
Unfortunately, this scheme does not work as-is: since $x' \neq x$, the signature is computed on a different image to the one produced by the post-hoc watermark\footnote{We discuss training a robust embedding such that $\Embed(x) = \Embed(x')$ in~\Cref{para:stable_hash}.}.

Instead of signing the image directly, we embed it $e \gets \Embed(x)$ and sign the embedding: $\sigma \gets \Sign(sk, e)$.
Then, we encode the signature and embedding into the image: $x' \gets \Encode(x, \sigma \parallel e)$.
To verify the watermark, we decode $\sigma$ and $e$ from the candidate watermarked image by parsing $\sigma, e \gets \Decode(x')$.
We also require the embedding of the candidate image: $e' \gets \Embed(x')$.
For the candidate image to be considered watermarked by $sk$, two conditions must be met.
First, $\Compare(e, e') = \true$: the watermarked image is perceptually similar to the image corresponding to the signed embedding.
Second, $\Verify(pk, e, \sigma) = \true$: the signature must be authentic.

\subsection{Properties}

\textbf{Threat model.}
The key difference between publicly- and privately-detectable watermarks arises at detection time: in the public setting, $\Detect$ takes a public key rather than the secret key used at watermarking time.
The adversary also has full details of the scheme except for the secret key: this includes white-box access to $\Sign$ and $\Verify$ from the signature scheme, $\Encode$ and $\Decode$ from the post-hoc watermark, and $\Embed$ and $\Compare$ from the robust embedding.
For security, an adversary should not be able to forge a watermark corresponding to a secret key out of her control. That is, for all $x$,
\begin{align*}
    \Pr\left[
    \begin{array}{ccc}
        \begin{matrix}
            \Detect(pk, x^*) = \true \\
            \land\ x^* \neq x': \\
            (sk, pk) \gets \Gen(1^\lambda) \\
            x' \gets \Watermark(sk, x) \\
            (x^*) \gets \mathcal{A}(pk, x') \\
        \end{matrix} \\
    \end{array}
    \right] \leq \epsilon.
\end{align*}

\begin{restatable}{thm}{main}\label{thm:ref_to_rpws}
If $(\mathcal{T}_\REF, m, n, \epsilon_\REF)$-robust embedding functions, $(\mathcal{T}_\PGWS, c, \epsilon_\PGWS)$-post-hoc watermarking schemes, and $(\delta, \lambda)$-cryptographic signatures exist such that $c \geq \delta + n$ and $\PGWS.\Encode \in \mathcal{T}_\REF$, then $(\mathcal{T}_\REF \cap \mathcal{T}_\PGWS, \epsilon_\REF + \epsilon_\PGWS + \negl(\lambda))$-publicly-detectable watermarking schemes also exist.
\end{restatable}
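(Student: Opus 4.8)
The scheme to analyze is the one sketched just above the statement: $\Watermark(sk,x)$ sets $e \gets \Embed(x)$, $\sigma \gets \Sign(sk,e)$, and outputs $x' \gets \Encode(x,\ \sigma \parallel e)$, while $\Detect(pk,x^*)$ parses $(\sigma^*, e^*) \gets \Decode(x^*)$, recomputes $e' \gets \Embed(x^*)$, and returns $\true$ iff $\Compare(e^*, e') = \true$ and $\Verify(pk, e^*, \sigma^*) = \true$. The plan is to show this is a $(\mathcal{T}_\REF \cap \mathcal{T}_\PGWS,\ \epsilon_\REF + \epsilon_\PGWS + \negl(\lambda))$-publicly-detectable watermark by treating robustness (correctness) and unforgeability (security) separately, each with a short hybrid/union-bound argument over the guarantees of the three building blocks. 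The counting hypothesis $c \ge \delta + n$ enters only to note that the payload $\sigma \parallel e$ has length $\delta + n$ and therefore fits in the $c$-bit capacity of the post-hoc watermark, so $\Encode$ and $\Decode$ are well defined on it.

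\textbf{Robustness.} Fix an image $x$ and a transformation $T \in \mathcal{T}_\REF \cap \mathcal{T}_\PGWS$ and trace $\Detect\big(pk,\ T(\Watermark(sk,x))\big)$. Detection succeeds unless one of three bad events occurs. (i) The post-hoc watermark fails to recover the payload: since $T \in \mathcal{T}_\PGWS$, post-hoc correctness gives $\Decode\big(T(\Encode(x,\sigma\parallel e))\big) = \sigma \parallel e$ except with probability $\le \epsilon_\PGWS$, so with that probability $(\sigma^*,e^*) = (\sigma,e)$. (ii) Conditioned on (i), the embedding check is $\Compare\big(e,\ \Embed(T(\Encode(x,\sigma\parallel e)))\big)$, i.e.\ it compares $\Embed(x)$ against $\Embed$ of the image obtained from $x$ by $\Encode(\cdot,\sigma\parallel e)$ followed by $T$; by the hypothesis $\PGWS.\Encode \in \mathcal{T}_\REF$ together with $T \in \mathcal{T}_\REF$ (and closure of $\mathcal{T}_\REF$ under composition) this composite map lies in $\mathcal{T}_\REF$, so robust-embedding correctness makes the check succeed except with probability $\le \epsilon_\REF$. (iii) Signature verification fails on the honestly produced $\sigma = \Sign(sk,e)$, which happens with probability $\le \negl(\lambda)$. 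A union bound over (i)--(iii) gives detection success with probability at least $1 - \epsilon_\REF - \epsilon_\PGWS - \negl(\lambda)$.

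\textbf{Unforgeability.} Suppose a PPT $\mathcal{A}$ on input $(pk,x')$, with $x' \gets \Watermark(sk,x)$, outputs $x^* \neq x'$ and $\Detect(pk,x^*) = \true$; write $(\sigma^*,e^*) \gets \Decode(x^*)$ and let $e$ denote the embedding $\Embed(x)$ signed during watermarking. I would split on whether $e^* = e$. If $e^* \neq e$, then $\Verify(pk,e^*,\sigma^*) = \true$ exhibits a valid signature on a message distinct from the only message ever signed in the game, so a black-box reduction breaks signature unforgeability; this branch contributes $\le \negl(\lambda)$ (the reduction either uses the signing-oracle formulation, querying on $\Embed(x)$ to simulate $\Watermark$, or the oracle-free formulation instantiated with target message $\Embed(x)$, both games being universally quantified over the message/image). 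If instead $e^* = e$, the detector's first check yields $\Compare(\Embed(x),\Embed(x^*)) = \true$, so by the robust embedding's collision resistance $x^* \in \Gamma(x)$ except with probability $\le \epsilon_\REF$; that is, the only way to ``win'' here is to return an image perceptually tied to the honestly watermarked content, which the forgery game excludes (reading ``$x^* \neq x'$'' as ``$x^* \notin \Gamma(x)$'', consistent with the $\Gamma$-conditioned collision-resistance notion). A union bound gives security error $\le \epsilon_\REF + \negl(\lambda) \le \epsilon_\REF + \epsilon_\PGWS + \negl(\lambda)$, so the $\epsilon_\PGWS$ term is slack for security and is only genuinely needed for robustness.

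\textbf{Main obstacle.} The bookkeeping above is routine; the two delicate points are exactly the ones the hypotheses are designed to handle. First, in the robustness argument, invoking robust-embedding correctness on the \emph{composed} transformation $T \circ \Encode(\cdot,\sigma\parallel e)$ needs both $\PGWS.\Encode \in \mathcal{T}_\REF$ and closure of the transformation families under composition (equivalently, a slightly restated robust-embedding guarantee covering chained transformations); getting this interface right, rather than paying an extra $\epsilon_\REF$ by treating $\Encode$ and $T$ as two separate hops and silently assuming transitivity of $\Compare$, is the crux. Second, in the unforgeability argument, one must pin down precisely which deviation $\mathcal{A}$ is forced into so that the $e^* = e$ branch collapses onto collision resistance rather than onto a trivial win by benign re-editing of $x'$; this is what links the scheme's security notion to the $\Gamma$-conditioned collision resistance of the robust embedding, and I expect stating it cleanly to be the part that needs the most care.
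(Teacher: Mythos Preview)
Your proposal is correct and follows essentially the same route as the paper: the same construction, a union bound over the three component failures for correctness (the paper's Claim~1), and a reduction showing a successful forger must break either signature unforgeability or the embedding's collision resistance (the paper's Claim~2). Your explicit case split on whether $e^* = e$ is a cleaner organization of what the paper states more loosely as ``at least one of $\mathcal{A}_\CSS$ or $\mathcal{A}_\REF$ will succeed,'' and the two subtleties you flag in your obstacle section (closure of $\mathcal{T}_\REF$ under the composite $T \circ \Encode$, and reconciling ``$x^* \neq x'$'' with the $\Gamma$-conditioned notion) are exactly the points the paper's appendix proof glosses over rather than resolves.
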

For the full proof of~\Cref{thm:ref_to_rpws}, see Supplement~\ref{app:embed_to_watermark}.
Given secure building blocks (i.e., robust embedding, post-hoc watermark, and signature scheme), the resulting watermark inherits key properties from each relevant underlying primitive.
We provide an overview below:

\textbf{Robustness.} The robustness of the resulting scheme is the set of transformations common between the robust embedding and the post-hoc watermark.
For the watermark to be detectable, the data encoded in the private watermark and the robust embedding of the image need to be preserved.

\textbf{Unforgeability.} We outline the high-level intuition for the unforgeability of our scheme.
Imagine the definition does not hold, and it is computationally tractable to find a forged $x^*$ for any input image $x$.
Given the forgery $x^*$ satisfying $\Verify(pk, e, \sigma) \land \Compare(e, e') = \true$, the forged message-signature pair $(e, \sigma)$ or colliding embeddings $(e, e')$ are immediately recoverable, implying attacks against the underlying primitives: either
\begin{enumerate*}[label=(\alph*)]
  \item the underlying cryptographic signature scheme is forgeable, or
  \item the underlying robust embedding is not collision-resistant
\end{enumerate*}.
Since we assume that such primitives are secure to begin with, we reach a contradiction implying that the watermark is indeed unforgeable.

\textbf{Imperceptibility.} Imperceptibility is an intrinsic property of the (underlying) watermarking scheme---we demonstrate how to use an existing private watermarking scheme to instantiate a publicly-detectable one. 
For example, if we use the TrustMark-Q~\citep{bui2023trustmark} watermark to instantiate our robust and public watermark, the scheme would inherit the PSNR ($43.26 \pm 1.59$) and SSIM ($0.99 \pm 0.00$) of TrustMark-Q directly. 
We further note that there is a tradeoff between watermark robustness or capacity and imperceptibility: watermarks with better robustness and/or capacity tend to introduce more distortions.

\section{Instantiating Our Scheme in the Real World}\label{sec:barriers}

We analyze the feasibility of instantiating each building block of our construction in the real world.

\subsection{(Compact) Cryptographic Signatures}

Standard cryptographic signature schemes such as RSA~\citep{rivest1978method} or ECDSA~\citep{johnson2001elliptic} are secure, efficient, and widely supported.
For use in a robust and publicly-detectable watermark, we require compact signature schemes with short signature lengths.
For standard 128-bit security, RSA signatures are at least 3072 bits ~\citep{elaine2016recommendation}, and ECDSA signatures are at least 512 bits (on the $\mathsf{secp256r1}$ elliptic curve; \citealp{certicom2010sec2}).
BLS signatures may offer better compactness~\citep{boneh2001short} though require stronger assumptions and careful choice of pairing-friendly curves.
For example, the most widely-used curve, $\mathsf{BLS12\text{-}381}$~\citep{barreto2003constructing}, supports a minimum signature size of 384 bits for 128-bit security.
In our setting, it is reasonable to target lower security as breaking even 80-bit security is expected to be orders of magnitude more difficult than launching attacks on the robust embedding.
This would require future work on suitable elliptic curves.

\begin{tcolorbox}[colback=gray!10!white,leftrule=2.5mm,size=title]
\textbf{Takeaway.} Barring a leap in compact digital signature design, signature sizes are unlikely to shorten drastically in the foreseeable future.
\end{tcolorbox}

\subsection{Realizing a Robust Embedding Function}

We evaluate a range of self-supervised image embedding models for their suitability to instantiate our robust embedding function.
We ask:

\begin{center}
    \textit{How collision-resistant are state-of-the-art image embedding models in a white-box setting?}
\end{center}

\textbf{Methodology.}
For a range of state-of-the-art embedding models (see ~\Cref{tab:models}), we instantiate a robust embedding function:
\begin{enumerate*}[label=(\alph*)]
    \item the embedding function is simply the model's native forward pass, and
    \item embeddings are compared by computing their $\ell_2$-normalized dot product.
\end{enumerate*}
Note that we do not binarize comparison values in order to capture fine-grained performance.
See Supplement~\ref{sec:experimental_data} for additional experimental data.

\textbf{Data.} We use the ``original'' and ``strong'' components of the Copydays dataset~\citep{douze2009evaluation} for copy detection.
This provides base images and their strongly-transformed variants.
For all base images, we randomly select a positive and negative image which respectively represent a transformed version of the original image and a completely different image.
We denote the similar pair as a \textit{positive} pair and the different pair as a \textit{negative} pair.
Thus, let $(I_{1,a}, I_{1,a'}), (I_{2,a}, I_{2,a'}), \ldots, (I_{1,a}, I_{1,b}), (I_{2,a}, I_{2,b}), \ldots \in \mathcal{D}$ be the dataset $\mathcal{D}$ of both positive $(I_{i, a}, I_{i, a'})$ and negative $(I_{i, a}, I_{i, b})$ image pairs.

\textbf{Attack.} We fix a baseline projected gradient descent (PGD)~\citep{madry2018towards} with momentum attack (20 steps) for various choices of $L_p$ epsilon values for $p \in \{1, \infty\}$. 
The attack is designed to force the $\ell_2$-normalized dot product of embedding pairs to be far (close) for positive (negative) pairs.
If an adversary has access to the image embedding, then using adversarial attacks, they can forge a watermark by forcing two different images to have a high similarity.
For example, if an adversary wants to use PGD and they have access to the embedding, they can perform gradient ascent over the input space up to some $\ell_p$ imperceptible norm bound.

\textbf{Evaluation.} We measure the following:
\begin{enumerate}
    \item \textit{Clean performance.} 
    We calculate the area under the receiver-operating characteristic curve (ROC AUC) of the binary classification problem captured by $\mathcal{D}$.
    We compute the score for a given pair with $\Compare(\Embed(I_{i,\cdot}), \Embed(I_{i,\cdot}))$ and assign a binary label depending on if the pair is positive or negative.
    This captures raw model performance as a robust embedding function.
    
    \item \textit{Attacked performance.} 
    The dataset $\mathcal{D}$ is attacked with fixed PGD attacks for various perturbation strengths such that positive (negative) pairs are embedded far (close) in the embedding space.
    ROC AUC is calculated as in the clean case but on the attacked dataset $\mathcal{A}_{\ell_p, \epsilon}(\mathcal{D})$ for $p \in \{1, \infty\}$ and $\epsilon \in \{1/255, 2/255, 4/255, 8/255, 16/255, 32/255\}$.
\end{enumerate}

\begin{figure*}[t]
\centering
\subfigure{
  \includegraphics[height=0.31\textwidth]{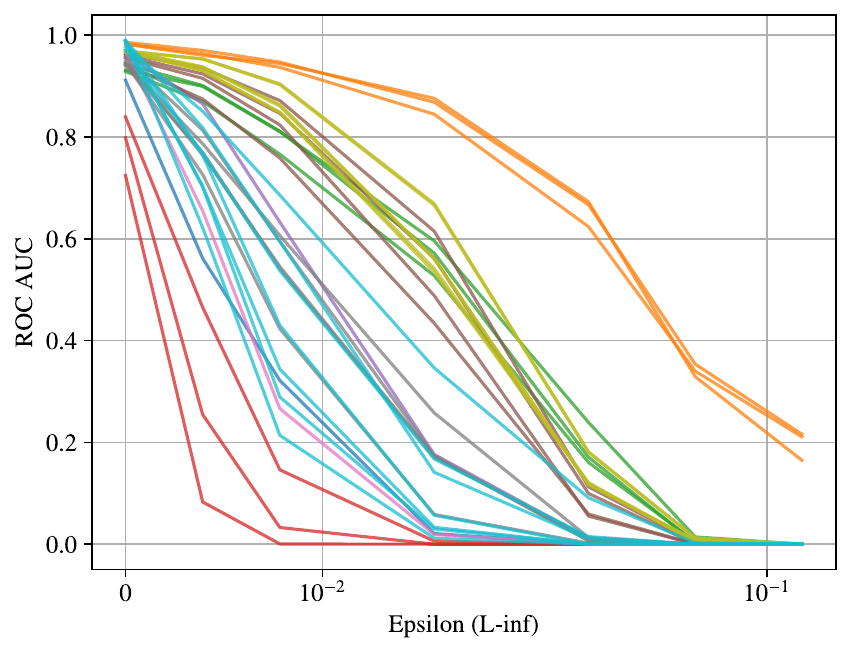}
  \label{fig:roc_auc_vs_epsilon_linf}
}
\subfigure{
  \includegraphics[height=0.31\textwidth]{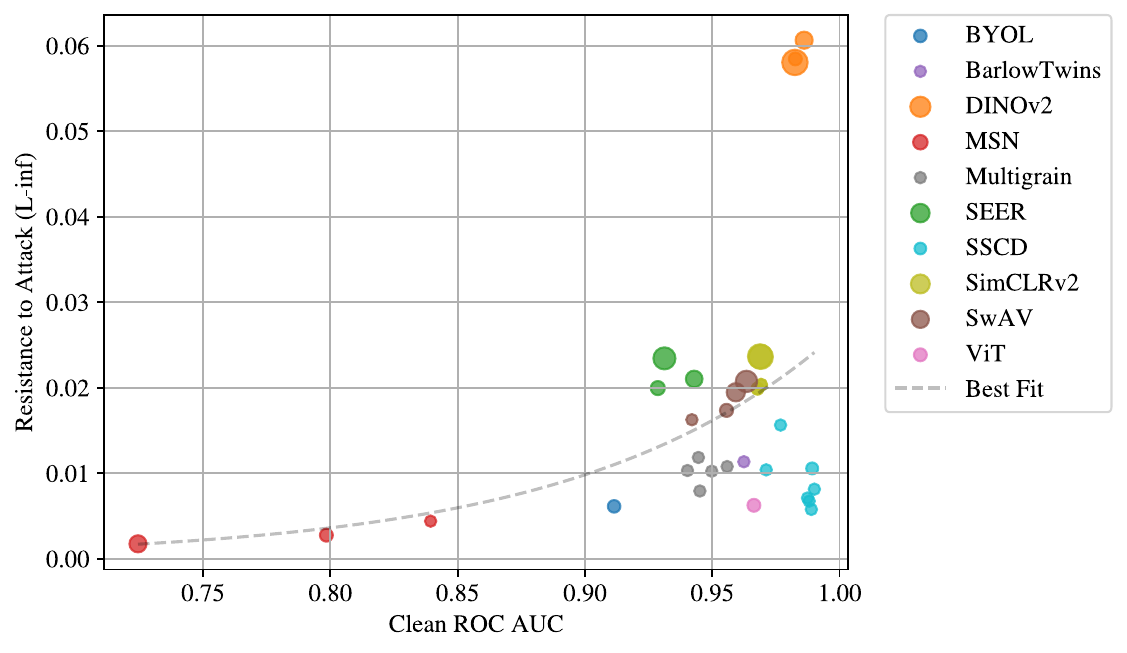}
  \label{fig:clean_roc_auc_vs_area_linf}
}
\caption{Resistance to $\ell_\infty$ attacks slightly increases with model performance. The y-axis of the right graph is calculated as the area under the corresponding curve in the left graph.}
\label{fig:attack_resistance_linf}
\end{figure*}

\textbf{Findings.} 
We find current embedding models are well-suited to similarity search in the absence of an adversary: the best performing models, DINOv2~\citep{oquab2023dinov2} and SSCD~\citep{pizzi2022self}, achieve clean ROC AUC values of 0.990 and 0.986 respectively for their best checkpoints.
Unfortunately, model performance quickly drops in the presence of a baseline adversary.
Attacking the best SSCD model with an $\epsilon = 4/255$, $\ell_\infty$ attack reduces ROC AUC to 0.057, which is completely unusable.
However we observe an interesting trend: higher performance models exhibit more resistance to adversarial attack.
The current state-of-the-art image embedding models, DINOv2, are noticeably more resistant to our fixed attack.
Note that our baseline attack gives an upper bound on robustness---targeted attacks on specific models would likely degrade performance more effectively.
Naturally, the higher resistance models result in less visibly perturbed images (see Supplement~\ref{subsec:attacked_imagery} for an example).
Even if one gains robustness to a specific threat model (e.g., $\ell_\infty$ perturbations), it may not generalize.
For example, all evaluated models are also vulnerable to $\ell_1$ perturbations (see~\Cref{fig:attack_resistance_l1}).
We expect similar results to hold for more exotic threat models.

\begin{tcolorbox}[colback=gray!10!white,leftrule=2.5mm,size=title]
\textbf{Takeaway.} 
We want a negligibly low probability of adversarial success but find that an adversary can efficiently break any evaluated embedding model.
For our scheme to be deployable, this gap would need to be closed.
\end{tcolorbox}

\subsection{High-Capacity Post-Hoc Image Watermarking}
We require a post-hoc watermark with a large capacity to encode both a cryptographic signature and a robust embedding into an image.
We find a number of ways to increase capacities or reduce capacity requirements.

\textbf{Descriptor quantization.} A ``free lunch'' optimization is to leverage existing quantization techniques. 
If the original descriptors are represented with $\mathsf{float32}$ values, using $8$-bit quantization, for example, immediately reduces capacity requirements by 75\% with little performance degradation~\citep{jacob2018quantization}.

\textbf{Conversion of generation-time watermarks to post-hoc watermarks.}
Publicly-accessible post-hoc watermarks currently do not support large enough payloads---for example, TrustMark~\citep{bui2023trustmark}, was evaluated to support at most 200 bits (at which point bit error rate (BER) becomes significant).
Despite this, the technological trend suggests that larger payloads may be supported through careful optimization of the capacity-robustness tension.

\textbf{Deferred storage.}
Instead of storing a signature and embedding directly, one can store a database reference for remote recovery of the full payload.
This optimization is necessary if a high-dimensional embedding is used (e.g., DINOv2~\citep{oquab2023dinov2} or SEER~\citep{goyal2021self} descriptors), and it introduces a critical assumption: there must exist a highly-available database.
If the database is offline, detection is impossible.
We remark that this optimization has no effect on security as it is merely an extension of the post-hoc watermark.
The adversary's capabilities are the same as in the base scheme: it can destroy but not forge watermarks.
If extra security is desired, one can encrypt the payload before storing it server-side with an encryption key that is encoded in the image.

\textbf{Stable deep hashing.}\label{para:stable_hash}
Within the image itself, we store an embedding of the image and a signature.
The embedding enables checking if a transformed image is still similar to the \textit{original} watermarked image.
This comes at a substantial cost: the embedding size is likely to be much larger than the signature, depending on the performance of the source model.
Ideally, the robust embedding function should produce identical, discretized embeddings for visually similar images.
Future research is necessary to determine whether advanced transformations can be handled by a deep stable hash, but existing schemes are already robust to quantization or re-encoding~\citep{apple2021csam}.

\begin{tcolorbox}[colback=gray!10!white,leftrule=2.5mm,size=title]
\textbf{Takeaway.} Post-hoc watermark capacity is unlikely to prevent deploying a robust, unforgeable, and publicly-detectable watermark.
\end{tcolorbox}

\section{Concluding Remarks}\label{sec:discussion}

In this paper, we explore the construction of a robust, unforgeable, and publicly-detectable watermark for images.
We prove that such a scheme exists, but its deployment is limited by the white-box security of image embedding models.
We now discuss promising avenues for future work.

\textbf{On indestructibility.}
This work focuses on unforgeability without explicitly handling \textit{indestructibility}---how easily a watermark can be removed without degrading image quality.
No scheme is presently known to have this property in the white-box setting~\citep{zhang2023watermarks}.
We remark that indestructibility may emerge organically as the set of robust transformations better approximates human vision.
Consider the ideal case where the underlying watermark and embedding function are perfectly robust.
By definition, it follows that any invalid transformation of the original watermarked image is \textit{not} visually similar and should not be detected as watermarked.
This suggests that as robustness improves, indestructibility will also improve.

\textbf{Robust embeddings in the wild.}
Robust image embeddings are applied as perceptual hashes for explicit material detection, where care must be taken when the models are deployed in a white-box setting.
As an illustrative example, Apple's NeuralHash scheme was a robust image embedding-based perceptual hash deployed to Apple systems in 2021 to detect CSAM images without revealing exact images to Apple servers~\citep{apple2021csam}.
The robust embedding model was hosted directly on user machines---as a result, adversarial attacks that broke the collision resistance of NeuralHash quickly surfaced~\citep{struppek2022learning}, which consequently broke the ``unforgeability'' of the larger system for CSAM detection.

\textbf{Adversarial robustness.}
The main barrier to deploying our robust watermark is resolving (the lack of) adversarial robustness of image embedding models.
We observe a weak correlation between raw model performance and resistance to attacks, suggesting that higher performing models may be more adversarially-robust than weaker models.
General progress in adversarial ML may be adapted to existing models to strengthen security in the white-box setting.

\textbf{Alternative pathways to public detection.}
While we have explored a method of combining deep learning and cryptography, we do not rule out the possibility of realizing a robust and publicly-detectable watermark through a different approach: we leave it to future work to develop new schemes that better resist white-box attack.
We expect, however, that any scheme leveraging deep learning may inevitably face adversarial attack: the detection algorithm must be fully public, and thus any components used within it must resist public white-box attack.

\bibliographystyle{abbrvnat}
\bibliography{references}

\clearpage
\appendix

\section{Extended Related Work}\label{app:extended_rw}

Here we expand on the watermarking and robust embedding literature.

\subsection{Watermarking}

The literature on watermarking can be grouped into two categories: classical watermarking and deep watermarking. 
We highlight key papers as follows.

\textbf{Classical watermarking.} Classical watermarking primarily aims to embed watermarks to enforce copyright.
An early idea was to embed secret information into the least significant bits of each pixel, leading to a low-distortion watermark~\citep{wolfgang1996watermark}---this is similar to our simple non-robust scheme.
Like ours, since the watermark is embedded in the `least important'' bits, the watermark is non-robust.
\citet{cox1997secure} developed the spread-spectrum technique where the watermark is instead embedded in the frequency domain.
This led to robustness to re-encoding, minor crops, and lossy compression.
Many subsequent works followed with the aim of increasing payload size and robustness.
The patchwork algorithm~\citep{yeo2003generalized} and DWT-DCT-SVD~\citep{navas2008dwt} are particularly well adopted solutions.

\textbf{Deep watermarking.} 
More recent watermarks turned to deep learning for better performance.
Such watermarks demonstrated higher payloads, higher robustness, and lower image distortion than classical methods~\citep{wan2022comprehensive}.
HiDDeN~\citep{zhu2018hidden} was the first encoder-decoder watermark for images.
StegaStamp~\citep{tancik2020stegastamp} refined the encoder-decoder architecture by leveraging spatial information. 
This led to better robustness against geometric transformations.
SSL~\citep{fernandez2022watermarking} uses a different approach and instead watermarks images in the latent space during inference---this trades off image quality for inference speed.
RoSteALS~\citep{bui2023rosteals} similarly applies the watermark in the latent space but instead uses a VQ-VAE architecture~\citep{van2017neural}.
Finally, TrustMark~\citep{bui2023trustmark} is resolution invariant and supports re-watermarking out of the box.

More recently, watermarking schemes have been tailor-made for specific generation processes.
In the image setting, the Stable Signature~\citep{fernandez2023stable} introduced an active strategy to embed the watermark during diffusion. 
The latent decoder is fine-tuned to embed a binary signature which a pre-trained watermark extractor can identify. 
This lower-level approach led to higher performance than post-hoc methods.
The Tree-Ring method~\citep{wen2023tree} also performs watermarking in the latent space and is designed to minimize distortions at the cost of model performance.
Gaussian Shading~\citep{yang2024gaussian} is the latest image watermark for diffusion models.
The approach is claimed to be computationally distortion-free: the watermarked distribution is computationally close to the native output distribution.
In the text setting,~\citet{fairoze2023publicly} gave an unforgeable and publicly-detectable watermark with weak robustness.

\subsection{Robust Embeddings}\label{subsec:robust_embeddings}

There are a wide range of problems that come under the umbrella of robust embeddings.
The fundamental task in any one of these formulations is to determine if two objects (images) are similar or not from a succinct representation of the original objects.
If this core problem is solvable, then it can be applied to various applications such as perceptual hashing, fingerprinting, or copy detection.
We summarize the literature for each of these tracks.

\textbf{Perceptual hashing.}
\citet{indyk1998approximate} introduced locality-sensitive hashing (LSH).
\citet{gionis1999similarity} gave a more efficient construction that requires asymptotically fewer queries when applied to nearest neighbor search.
Random hyperplane-based perceptual hashing is the most widespread method of performing LSHs~\citep{charikar2002similarity}.
The technique involves dividing up the embedding space by randomly partitioning it with hyperplanes---each cell produced by the partitioning process is assigned a unique hash value.
This method was adopted by Apple for their NeuralHash system~\citep{apple2021csam}.
To hash an image using NeuralHash, features are first extracted using a convolutional neural network.
The features are then perceptually hashed with a hyperplane-based LSH.
\citet{struppek2022learning} later demonstrated that it is easy to perturb a arbitrary image to be closely embedded to any arbitrary different image, breaking NeuralHash completely.
ITQ~\citep{ge2013optimized} is a deep learning approach to achieving the same effect as a classical LSH, i.e. ITQ produces a binary hash.

\textbf{Fingerprinting.}
\citet{venkatesan2000robust} introduced the notion of a robust image fingerprint and used wavelet representations of images to gain robustness to common transformations.
\citet{seo2004robust} improved their approach by instead using Radon transforms for robustness to affine trasformations.
In the audio setting, similar results exist~\citep{haitsma2002highly} that make optimizations specific to the audio domain.
In a similar vein, \citet{wang2003industrial} developed a highly effective audio fingerprint for identifying music.
Their techniques formed the backbone of Shazam~\citep{wang2006shazam}.

\textbf{Feature extraction.}
We highlight key works that are relevant to this paper. 
We refer the reader to the DINOv2 paper~\citep{oquab2023dinov2} for a more comprehensive overview of the technical state of image feature extraction.
The Simple Framework for Contrastive Learning of Visual Representations (SimCLR)~\citep{chen2020simple} demonstrated that many prior self-supervised learning algorithms could be simplified, removing the need for specialized architectures or memory banks.
This simplification led to better performing models that required fewer weights than prior approaches.
\citet{pizzi2022self} augmented SimCLR with entropy regularization, InfoNCE loss~\citep{oord2018representation}, and inference-time score normalization to develop SSCD.
The model was recently used by the DINOv2 and Llama 3 teams to identify duplicate images~\citep{oquab2023dinov2,dubey2024llama}.
\citet{berman2019multigrain} developed Multigrain: a network architecture designed to produce compact descriptors for image classification and object retrieval downstream tasks.
It crucially leverages different levels of image ``granularity'' to learn generalized features.
Bootstrap Your Own Latents (BYOL)~\citep{grill2020bootstrap} uses two sub-networks, denoted the online and target networks.
The online network learns to predict the target network representation of the same image under a particular transformation.
BYOL descriptors were found to outperform SimCLR descriptors.
SwAV~\citep{caron2020unsupervised} is designed to learn constrastively without directly making pairwise comparisons.
The method aims to cluster transformations of the same image together: this has a similar effect as direct constrastive comparison.
The approach was also found to outperform SimCLR.
\citet{zbontar2021barlow} developed Barlow Twins. 
The method aims to avoid learning trivial solutions (e.g., constant embeddings) by optimizing cross-correlations between outputs of two identical networks (each fed with transformed images) to be close to a target cross-correlation.
Aside from avoiding collapse, this has an additional effect of minimizing redundant embedding information.
Barlow Twins descriptors were found to outperform BYOL and SwAV descriptors.
SEER~\citep{goyal2021self} is another self-supervised learning method that further closed the gap with supervised methods.
Their models are based on SwAV and are optimized to scale---SEER achieved new state-of-the-art across a range of model size classes, scaling to billions of parameters.
\citet{assran2022masked} developed Masked Siamese Networks (MSN).
The framework aims to match the representation of masked images with the original image.
This naturally interfaces with Vision Transformers (ViT)~\citep{alexey2020image} as they only handle unmasked images by default.
This combination was found to scale well and outperformed prior work.
DINOv2~\citep{oquab2023dinov2} is another ViT-based self-supervised method that aimed to consolidate research following SEER to optimize performance at scale.
They demonstrate that careful orchestration of the training pipeline coupled with recent advancements in self-supervision led to new state-of-the-art embeddings.
In particular,DINOv2 image patches were shown to semantically capture various aspects of human vision.

\section{Formalism}\label{app:formalism}

We present the necessary primitives for formal analysis of robust and publicly-detectable watermarking schemes ($\RPWS$).
In doing so, we precisely define what properties each primitive must satisfy---this allows us to parameterize the resulting $\RPWS$ later on.

\subsection{Definitions}

\begin{definition}[Robust embedding function]\label{def:ref}

A $(\mathcal{T}, m, n, \epsilon)$-robust embedding function $(\REF)$ is a triple of algorithms $(\Gen, \Embed)$ defined as follows:

\begin{enumerate}
    \item $\Gen(1^\lambda, m, n, \mathcal{T}) \to (\Embed, \Compare)$. Generate takes the input size $m$, output size $n$, and the security parameter $\lambda$, outputting the robust embedding function in time $\poly(\lambda)$.
    
    \item $\Embed(x) \to e$. The function $\Embed : \bits^{m} \to \bits^{n}$ takes an arbitrary object $x$ and maps it to a discrete point in the embedding space $e$.

    \item $\Compare(x, y) \to b$. The function $\Compare: \bits^n \times \bits^n \to \bits$ takes two embeddings $x$ and $y$ and outputs a single bit $b$ such that $b$ is $\true$ when $x$ and $y$ are similar, and $\false$ otherwise.
\end{enumerate}

A valid robust embedding function must satisfy the following two properties:

\begin{enumerate}
    \item \textbf{Correctness.} For all $x$ and transformations $T \in \mathcal{T}$,
    \begin{align*}
        \Pr\left[
        \begin{array}{ccc}
            \begin{matrix}
                \Compare(\Embed(x), \Embed(T(x))) = \true: \\
                \Embed, \Compare \gets \Gen(1^\lambda, m, n, \mathcal{T}) \\
            \end{matrix} \\
        \end{array}
        \right] \geq 1 - \epsilon.
    \end{align*}
    
    \item \textbf{Collision-resistance.} For any choice of $x$, it must be that
    \begin{align*}
        \Pr\left[
        \begin{array}{ccc}
            \begin{matrix}
                \Compare(\Embed(x), \Embed(x^*)) = \true \\ 
                \land\ x^* \not\in \Gamma(x): \\
                \Embed, \Compare \gets \Gen(1^\lambda, m, n, \mathcal{T}) \\
                x^* \gets \mathcal{A}^{\Embed(\cdot)}(m, n, \mathcal{T}, x) \\
            \end{matrix} \\
        \end{array}
        \right] \leq \epsilon.
    \end{align*}
    
\end{enumerate}

\end{definition}

\begin{definition}[Cryptographic signature scheme]\label{def:crypto_sig}

A $(\delta, \lambda)$-cryptographic signature scheme $\CSS$ is a 3-tuple ($\Gen, \Sign, \Verify)$ defined as follows:
\begin{enumerate}
    \item $\Gen(1^{\lambda}) \to (sk, pk)$. Takes in the target security bits number $\lambda$ and outputs a fresh secret key $sk$ and public key $pk$ pair in time $\poly(\lambda)$.
    \item $\Sign(sk, x) \to \sigma$. Given the secret key $sk$ and a object to sign $x$, $\Sign$ computes $\sigma \in \bits^\delta$: a signature of $sk$ on $x$.
    \item $\Verify(pk, x, \sigma) \to b$. Given a public key $pk$, object $x$, and signature $\sigma$, $\Verify$ checks if $\sigma$ is a valid signature of the corresponding secret key of $pk$ on $x$ and encodes the result into one bit $b$. It returns either $\true$ or $\false$ depending on if verification succeeded or not.
\end{enumerate}

A valid $\CSS$ needs to satisfy the following properties:

\begin{enumerate}
    \item \textbf{Correctness.} It holds that for all $x$,
    \begin{align*}
        \Pr\left[
        \begin{array}{ccc}
            \begin{matrix}
                \Verify(pk, x, \Sign(sk, x)) = \true: \\
                (sk, pk) \gets \Gen(1^\lambda) \\
            \end{matrix} \\
        \end{array}
        \right] \geq 1 - \negl(\lambda).
    \end{align*}
    
    \item \textbf{Unforgeability}. Let $X$ denote the set of oracle queries that the adversary makes to $\Sign(sk, \cdot)$. It holds that for all $x$,
    \begin{align*}
        \Pr\left[
        \begin{array}{ccc}
            \begin{matrix}
                \Verify(pk, x^*, \sigma^*) = \true \\
                \land\ x^* \not\in X: \\
                (sk, pk) \gets \Gen(1^\lambda) \\
                (x^*, \sigma^*) \gets \mathcal{A}^{\Sign(sk, \cdot)}(pk) \\
            \end{matrix} \\
        \end{array}
        \right] \leq \negl(\lambda).
    \end{align*}
\end{enumerate}

\end{definition}

\begin{definition}[Post-hoc watermarking scheme]

A $(\mathcal{T}, c, \epsilon)$-post-hoc watermarking scheme $\PGWS$ is a 3-tuple of algorithms defined as follows:

\begin{enumerate}
    \item $\Gen(1^\lambda, c, \mathcal{T}) \to (\Encode, \Decode)$. Takes in the target security bits number $\lambda$, the capacity size in bits $c$, and the set of possible transformations $\mathcal{T}$ and outputs the encode and decode algorithms in time $\poly(\lambda)$.
    
    \item $\Encode(x, m) \to x'$. Given a message $m$ such that $m \in \bits^c$ and target object $x$, $\Encode$ plants $m$ a watermark in $x$ producing $x'$.
    
    \item $\Decode(x^*) \to m$. Given a potentially watermarked object $x^*$, $\Decode$ recovers $m \in \bits^c$ if it exists.
\end{enumerate}

Let $\mathcal{X}$ and $\mathcal{M}$ be the set of all valid objects and messages respectively.
A valid $\PGWS$ must satisfy correctness with optional private unforgeability:

\begin{enumerate}
    \item \textbf{Correctness.} For all choices of $x$,  $m$, and $\mathcal{T}$,
    \begin{align*}
        \Pr\left[
        \begin{array}{ccc}
            \begin{matrix}
                \Decode(T(\Encode(x, m))) = m: \\
                \Encode, \Decode \gets \Gen(1^\lambda, c, \mathcal{T}) \\
            \end{matrix} \\
        \end{array}
        \right] \geq 1 - \epsilon.
    \end{align*}
    
    \item \textbf{Private Unforegability.} Let $X$ be the list of object queries that $\mathcal{A}$ makes to the decoding oracle and let $\Gamma(X)$ be the set containing all neighboring query objects. It must be that for all $x$, $m$, and $\mathcal{A}$,
    \begin{align*}
        \Pr\left[
        \begin{array}{ccc}
            \begin{matrix}
                \Decode(x^*, m) = m \\ 
                \land\ x^* \not\in \Gamma(X): \\
                \textunderscore, \Decode \gets \Gen(1^\lambda, c, \mathcal{T}) \\
                x^* \gets \mathcal{A}^{\Decode(\cdot, \cdot)}(c, \mathcal{T}, x, m) \\
            \end{matrix} \\
        \end{array}
        \right] \leq \epsilon.
    \end{align*}
    Note that this property is not a strict requirement for the purpose of constructing a robust and publicly-detectable watermark.
    We include a definition of private unforgeability for completeness.
\end{enumerate}
\end{definition}

\begin{definition}[Robust publicly-detectable watermarking scheme]

A $(\mathcal{T}, \epsilon)$-robust publicly-detectable watermarking scheme $\RPWS$ consists of three algorithms defined as follows:

\begin{enumerate}
    \item $\Gen(1^\lambda) \to (sk, pk)$. Takes in the security parameter $\lambda$ and outputs a fresh secret key $sk$ and public key $pk$ pair.
    
    \item $\Watermark(sk, x) \to x'$. Given a secret key $sk$ and target object $x$, $\Watermark$ plants a watermark in $x$ under $sk$ producing $x'$.
    
    \item $\Detect(pk, x^*) \to b$. Given a potentially watermarked object $x^*$, $\Detect$ checks if there is valid watermark under the corresponding secret key to $pk$. It outputs $\true$ or $\false$ as a single bit depending on if verification succeeds or not.
\end{enumerate}

A valid $\RPWS$ must satisfy the following properties:

\begin{enumerate}
    \item \textbf{Correctness.}
    It holds that for all $x$,
    \begin{align*}
        \Pr\left[
        \begin{array}{ccc}
            \begin{matrix}
                \Detect(pk, T(\Watermark(sk, x))) = \true: \\
                (sk, pk) \gets \Gen(1^\lambda) \\
            \end{matrix} \\
        \end{array}
        \right] \geq 1 - \epsilon.
    \end{align*}
    
    \item \textbf{Unforegability.}
    Let $X$ denote the set of oracle queries that the adversary makes to $\Watermark(sk, \cdot)$.
    It holds that for all $x$,
    \begin{align*}
        \Pr\left[
        \begin{array}{ccc}
            \begin{matrix}
                \Detect(pk, x^*) = \true \\
                \land\ x^* \not\in \Gamma(X): \\
                (sk, pk) \gets \Gen(1^\lambda) \\
                (x^*) \gets \mathcal{A}^{\Watermark(sk, \cdot)}(pk) \\
            \end{matrix} \\
        \end{array}
        \right] \leq \epsilon.
    \end{align*}
\end{enumerate}

\end{definition}

\subsection{Constructing a Robust and Publicly-Detectable Watermark}\label{app:embed_to_watermark}

In this section we will prove the following theorem:

\main*

\begin{proof}
To construct a $\PGWS$, we compose the $\REF$, $\CSS$ and $\PGWS$ in the natural way: the $\REF$ is used to compute a stable representation (embedding) of the image.
This embedding is cryptographically signed using $\CSS$ and both the signature and embedding are encoded within the image using the $\PGWS$.
In order to detect the watermark in an arbitrary image, it suffices to (a) check that the embedding of the image at hand is similar to the encoded embedding, and (b) check that the signature is authentic.
We formalize this simple construction as follows.

$\Gen$. 
On input $1^\lambda$, run $\CSS.\Gen(1^\lambda)$ to obtain $sk$ and $pk$. 
Output $(sk, pk)$.

$\Watermark$. 
Given the secret key $sk$ and cover image $x \in \mathcal{X}$,
\begin{enumerate}
    \item Compute the robust embedding $e = \REF.\Embed(x)$.
    \item Sign the embedding $\sigma \gets \CSS.\Sign(sk, e)$.
    \item Plant $\sigma$ and $e$ into $x$: $x' \gets \PGWS.\Encode(x, \sigma \parallel e)$.
    \item Output $x'$.
\end{enumerate}

$\Detect$. Given a public key $pk$ and candidate watermarked object $x'$,
\begin{enumerate}
    \item Compute the aggregated hash $e = \REF.\Embed(x')$.
    \item Decode the embedded payload if it exists $\sigma' \parallel e' \gets \PGWS.\Decode(x')$.
    \item Attempt signature verification $b_1 \gets \CSS.\Verify(pk, e', \sigma')$.
    \item Attempt closeness verification $b_2 \gets \REF.\Compare(e, e')$.
    \item Output $b_1 \land b_2$.
\end{enumerate}

\begin{claim}\label{clm:wm_corr}
The above scheme is correct if the underlying signature scheme and post-generation watermarking scheme are both correct.
\end{claim}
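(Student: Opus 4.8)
The plan is to unfold the $\RPWS$ correctness definition directly on the composed construction. Fix an arbitrary cover image $x$ and an arbitrary transformation $T \in \mathcal{T}_\REF \cap \mathcal{T}_\PGWS$, sample $(sk,pk) \gets \Gen(1^\lambda)$ together with $(\Embed,\Compare) \gets \REF.\Gen$ and $(\Encode,\Decode) \gets \PGWS.\Gen$, and trace the execution of $\Detect\big(pk,\, T(\Watermark(sk,x))\big)$. Writing $e = \REF.\Embed(x)$, $\sigma = \CSS.\Sign(sk,e)$, and $x'' = T\big(\PGWS.\Encode(x,\, \sigma \parallel e)\big)$, the detector computes $\hat e = \REF.\Embed(x'')$, parses $\sigma' \parallel e' = \PGWS.\Decode(x'')$, and outputs $\CSS.\Verify(pk,e',\sigma') \land \REF.\Compare(\hat e,e')$. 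I would then isolate three ``bad'' events and bound the total failure probability by a union bound.

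The three events are: (i) $E_\PGWS$, that $\PGWS.\Decode(x'') \neq \sigma \parallel e$ --- since $c \geq \delta + n$ the string $\sigma \parallel e$ (padded to length $c$ if needed) is a legal payload, and since $T \in \mathcal{T}_\PGWS$, post-hoc correctness gives $\Pr[E_\PGWS] \leq \epsilon_\PGWS$; (ii) $E_\CSS$, that $\CSS.\Verify(pk,e,\sigma) \neq \true$ for $\sigma = \CSS.\Sign(sk,e)$, which by signature correctness has probability at most $\negl(\lambda)$; and (iii) $E_\REF$, that $\REF.\Compare(\hat e,e) \neq \true$, which I bound by $\epsilon_\REF$ using correctness of the robust embedding (so the claim's hypotheses should really also list $\REF$-correctness, as witnessed by the $\epsilon_\REF$ term in \Cref{thm:ref_to_rpws}). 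On the complement of $E_\PGWS$ we have $\sigma' = \sigma$ and $e' = e$, so $\neg E_\CSS$ forces the first conjunct to $\true$ and $\neg E_\REF$ forces the second; hence $\Detect$ outputs $\true$ whenever none of the three events occurs, and the union bound yields detection probability at least $1 - (\epsilon_\REF + \epsilon_\PGWS + \negl(\lambda))$, exactly matching \Cref{thm:ref_to_rpws}.

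The main obstacle is discharging step (iii) rigorously. To invoke $\REF$-correctness I need the map $x \mapsto x'' = T\big(\PGWS.\Encode(x,\, \sigma \parallel e)\big)$ to be a transformation in $\mathcal{T}_\REF$, which is precisely where the hypotheses $\PGWS.\Encode \in \mathcal{T}_\REF$ and $T \in \mathcal{T}_\REF$ are used, together with closure of $\mathcal{T}_\REF$ under composition so that $T \circ \PGWS.\Encode(\cdot,\, \sigma \parallel e) \in \mathcal{T}_\REF$. A subtlety to flag is that the planted payload $\sigma \parallel e$ depends on the sampled $\Embed$ and on $sk$, so the ``transformation'' is not chosen independently of the randomness over which $\REF$-correctness is stated; I would handle this by conditioning first on the sampled functions and keys, noting that for each fixed choice $\sigma \parallel e$ is a fixed string and the composed map is then a genuine fixed element of $\mathcal{T}_\REF$, and only then applying $\REF$-correctness (over any residual internal randomness of $\Embed$). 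A minor point is that $\Detect$ evaluates $\REF.\Compare(\hat e, e')$ with the arguments in the opposite order to the correctness statement, so I would either assume $\Compare$ is symmetric or observe that the correctness property is equally valid for either ordering.
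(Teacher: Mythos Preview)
Your proposal is correct and follows essentially the same approach as the paper: a union bound over the three failure events (decoding, signature verification, and embedding comparison) to obtain the bound $1 - (\epsilon_\REF + \epsilon_\PGWS + \negl(\lambda))$. If anything, your version is more careful than the paper's own argument---you explicitly invoke the hypothesis $\PGWS.\Encode \in \mathcal{T}_\REF$ (and the implicit closure under composition) to justify the $\REF$-correctness step, and you correctly flag that $\REF$-correctness is needed despite the claim's statement omitting it; the paper simply asserts ``the similarity check fails with probability $\epsilon_\REF$'' without spelling out why.
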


\begin{proof}
Observe that this $\RPWS$ is robust to the intersection of the input transformation sets, i.e., $\mathcal{T}_\RPWS = \mathcal{T}_\REF \cap \mathcal{T}_\PGWS$---any transformation applied to the image that is not in this set will result in either incorrect decoding using the $\PGWS$ or $\REF.\Compare$ outputting $\false$.
Correctness immediately follows from the construction: we calculate the success probability loss as follows.
Given $(sk, pk) \gets \Gen(1^\lambda)$, $\Detect(pk, x) \Longleftrightarrow \CSS.\Verify(pk, \cdot, \cdot) \land \REF.\Compare(\cdot, \cdot)$ must output true for any honestly generated $x$.
Signature verification fails with probability $\negl(\lambda)$.
The similarity check fails with probability $\epsilon_\REF$.
Decoding fails with probability $\epsilon_\PGWS$.
Thus the overall check will succeed with probability at least $1 - (\epsilon_\REF + \epsilon_\PGWS + \negl(\lambda)) =: 1 - \epsilon_\PGWS$.
\end{proof}

\begin{claim}\label{clm:wm_forge_to_sig_or_ref}
If the scheme is forgeable then either the underlying signature scheme is forgeable or the robust embedding function is not collision-resistant.
\end{claim}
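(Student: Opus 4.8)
The plan is to prove the contrapositive directly: from any PPT adversary $\mathcal{A}$ that, given $pk$ and oracle access to $\Watermark(sk,\cdot)$, outputs some $x^*$ with $\Detect(pk,x^*)=\true$ and $x^*\notin\Gamma(X)$ (where $X$ is the set of its watermark queries) with probability exceeding the scheme's tolerance, I would extract either a forger for $\CSS$ or a collision-finder for $\REF$. The first move is to open up a successful forgery. Writing $(\sigma',e') := \PGWS.\Decode(x^*)$, the definition of $\Detect$ forces $\CSS.\Verify(pk,e',\sigma')=\true$ and $\REF.\Compare(\REF.\Embed(x^*),e')=\true$ simultaneously; so a single forgery already hands us a verifying message--signature pair $(e',\sigma')$, together with an image $x^*$ whose embedding $\Compare$-matches the recovered message.

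The heart of the argument is then a case split on whether $e'$ was ever signed during the run. Each query $\Watermark(sk,x_i)$ internally signs $e_i := \REF.\Embed(x_i)$, so the messages ever passed to $\CSS.\Sign$ form the set $E := \{e_1,\dots,e_q\}$, where $x_1,\dots,x_q$ are $\mathcal{A}$'s queries. In Case~A, $e'\notin E$: then $(e',\sigma')$ verifies on a never-signed message, so a reduction $\mathcal{B}$ that gets $pk$ and a signing oracle, answers each $x_i$ by computing $e_i=\REF.\Embed(x_i)$, querying its oracle for $\sigma_i$, and returning $\PGWS.\Encode(x_i,\sigma_i\parallel e_i)$ (a perfect simulation of $\Watermark$, since $\mathcal{B}$ never needs $sk$), and finally outputs $(e',\sigma')$ decoded from $x^*$, wins the signature game. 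In Case~B, $e'=e_j=\REF.\Embed(x_j)$ for some queried $x_j$: since $x_j\in\Gamma(x_j)\subseteq\Gamma(X)$ whereas $x^*\notin\Gamma(X)$, we get $x^*\notin\Gamma(x_j)$, and together with $\REF.\Compare(\REF.\Embed(x^*),\REF.\Embed(x_j))=\true$ this is precisely a collision in the sense of \Cref{def:ref}; it is witnessed by a reduction $\mathcal{C}$ that generates its own keys (and hence answers $\mathcal{A}$'s watermark queries honestly), runs $\mathcal{A}$, and returns the pair $(x_j,x^*)$. Since the two cases are disjoint and exhaust the forgery event, $\Pr[\text{forgery}] \le \Pr[\mathcal{B}\text{ wins}] + \Pr[\mathcal{C}\text{ wins}]$; so if both $\CSS$ is unforgeable and $\REF$ is collision-resistant, the forging probability is bounded by $\negl(\lambda)+\epsilon_\REF$, which is exactly the slack inherited into \Cref{thm:ref_to_rpws} (the extra $\epsilon_\PGWS$ there comes from correctness, \Cref{clm:wm_corr}, not from this claim). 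Contrapositively, a forgeable scheme yields a forgeable $\CSS$ or a non-collision-resistant $\REF$.

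The step I expect to be the real obstacle is Case~B, specifically reconciling it with the exact wording of collision resistance in \Cref{def:ref}, where the challenge image is fixed \emph{in advance} while here the colliding partner $x_j$ is one of $\mathcal{A}$'s \emph{adaptively} chosen queries --- so one cannot simply plant a given target among those queries, and a naive averaging/guessing argument does not obviously recover a fixed-target attack. The clean resolution is to read (and, if needed, restate) \REF{} collision resistance in the ``find \emph{any} colliding pair'' form that the paper's own analogy to a cryptographic hash function already suggests, under which $\mathcal{C}$'s output $(x_j,x^*)$ is directly a win; alternatively one can run the entire argument in the single-query threat model of \Cref{sec:robust_wm}, where exactly one embedding $\REF.\Embed(x)$ is ever signed, the case split collapses to ``$e'=\REF.\Embed(x)$ or not,'' and no planting is required. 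A smaller point to verify rather than assume is that $\mathcal{B}$ and $\mathcal{C}$ reproduce $\mathcal{A}$'s view exactly: this is immediate for $\mathcal{C}$ (it holds $sk$), and for $\mathcal{B}$ it reduces to routing every internal signature through the signing oracle --- which is also what certifies that the extracted $e'$ in Case~A really lies outside $\mathcal{B}$'s query set. The hypotheses $c\ge\delta+n$ and $\PGWS.\Encode\in\mathcal{T}_\REF$ of \Cref{thm:ref_to_rpws} are not used here; they are only needed for well-formedness and correctness.
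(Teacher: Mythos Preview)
Your argument follows the same reduction strategy as the paper---a successful $\RPWS$ forger yields either a $\CSS$ forger or a $\REF$ collision-finder---but your execution is substantially more careful than the paper's own proof. You make the case split on ``was $e'$ ever signed?'' explicit, you spell out how the $\CSS$ reduction $\mathcal{B}$ simulates $\Watermark(sk,\cdot)$ by routing each internal signature through its signing oracle, and you track the query sets needed to certify that the extracted $(e',\sigma')$ really lies outside $\mathcal{B}$'s signed-message list. The paper sketches two reductions $\mathcal{A}_\CSS$ and $\mathcal{A}_\REF$ and asserts that ``at least one will succeed,'' but it does not perform the case analysis, does not explain how $\mathcal{A}_\CSS$ answers watermark queries without $sk$, and its $\mathcal{A}_\REF$ literally submits the embedding pair $(e,e^*)$ rather than an image and never uses the fixed challenge input $x$ it is handed.

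The obstacle you flag in Case~B---that \Cref{def:ref} fixes the target image \emph{in advance} whereas the colliding partner $x_j$ here is one of $\mathcal{A}$'s adaptively chosen queries---is a real mismatch, and the paper's proof has the same gap (it simply does not engage with it). Your two proposed fixes are exactly the right ones: either reinterpret collision resistance in the ``find any colliding pair'' sense (which is what the paper's own analogy to cryptographic hash functions suggests and what its $\mathcal{A}_\REF$ seems to presume), or fall back to the single-query threat model of \Cref{sec:robust_wm}, where only one embedding is ever signed and the case split becomes trivial. Either reading makes both proofs go through; yours just makes the dependence explicit.
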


\begin{proof}
Assume for the sake of contradiction that there exists an adversary $\mathcal{A}_{\RPWS}$ that can break the unforgeability of the above $\RPWS$.
We will show that there exists an adversary that can produce an object $x^*$ such that $\RPWS.\Detect(x^*) = \true$ that was not honestly-generated or within the neighborhood of transformed honestly-generated objects.
Consider the following two reductions---we will prove that at least one of them will succeed in their respective security game if $\mathcal{A}_\RPWS$ is successful in breaking the $\RPWS$.

First, we construct an adversary $\mathcal{A}_\CSS$ against the underlying signature scheme's unforgeability.

Construction of $\mathcal{A}_\CSS$. On input $pk$ (and oracle access to the private signing algorithm):
\begin{enumerate}
    \item Run $\mathcal{A}_{\RPWS}(pk)$ to obtain $x^*$ such that with high probability $\RPWS.\Detect(pk, x^*) = \true$.
    \item Compute $\sigma^* \parallel e^* \gets \PGWS.\Decode(x^*)$.
    \item Return $(e^*, \sigma^*)$ as a forgery.
\end{enumerate}

Similarly, we construct $\mathcal{A}_\REF$ that breaks the collision-resistance of the underlying robust embedding function as follows:

Construction of $\mathcal{A}_\REF$. On input $x$, $\mathcal{T}$:
\begin{enumerate}
    \item Run $\mathcal{A}_{\RPWS}(pk)$ to obtain $x^*$ such that with high probability $\Detect(pk, x^*) = \true$.
    \item Compute the output of $\REF.\Embed$: $e \gets \REF.
    \Embed(x^*)$.
    \item Decode the payload $\sigma^* \parallel e^* \gets \Decode(x^*)$
    \item Submit $(e, e^*)$ as a collision.
\end{enumerate}

We know that with high probability, $\Detect(pk, x^*) = \CSS.\Verify(pk, e^*, \sigma^*) \ \land \ \REF.\Compare(e, e^*) = \true$ implying that either $\Verify$ or $\Compare$ has been forged such that $x^* \not\in \Gamma(x)$.
It is easy to see that at least one of $\mathcal{A}_\CSS$ or $\mathcal{A}_\REF$ will succeed: in order for $\Detect$ to be $\true$ for some $x^*$ that is not in the neighborhood of an honestly-watermarked image, the adversary must have forged a signature or broken the embedding function's collision resistance (or both).
Since we know that $\REF$ and $\CSS$ are respectively secure, we have our contradiction and it holds that the $\RPWS$ is unforgeable.
\end{proof}

Finally, by assumption, the capacity $c$ of the watermark is large enough to support a signature and a robust embedding, i.e., $c \geq n + \delta = \abs{\sigma} + \abs{e}$ and so it is valid to store both objects within the image using the $\PGWS$.

Combining~\Cref{clm:wm_corr} and~\Cref{clm:wm_forge_to_sig_or_ref} yields the proof.
\end{proof}

\section{Experimental Data}\label{sec:experimental_data}

We provide additional data from our evaluation of state-of-the-art pre-trained image embedding models to instantiate the robust embedding function.

\textbf{Platform.} All model checkpoints were evaluated using PyTorch~\citep{paszke2019pytorch} on a virtual machine running Debian 11 equipped with a single NVIDIA A100 GPU with 40GB VRAM.

\textbf{Models.} We enumerate all evaluated models in~\Cref{tab:models}.

\begin{table}[ht]
\centering
\resizebox{1.0\textwidth}{!}{%
\begin{tabular}{@{}l|llll@{}}
\toprule
\textit{Model}       & \textbf{Architecture} & \textbf{Data} & \textbf{Parameter Count} & \textbf{Dimensions}                                          \\ \midrule
\textit{BYOL}~\citep{grill2020bootstrap}        & ResNet-50             & ImageNet      & 72M                      & 2048                                                         \\
\textit{DINOv2}~\citep{oquab2023dinov2}      & ViT-g/14              & LVD           & 86M -- 1136M             & 257 $\times$ 768 -- 257 $\times$ 1536                        \\
\textit{SEER}~\citep{goyal2021self}        & RG256                 & IG            & 141M -- 637M             & 3712 $\times$ 12 $\times$ 12 -- 7392 $\times$ 12 $\times$ 12 \\
\textit{MSN}~\citep{assran2022masked}         & ViT-L/7               & ImageNet      & 85M -- 303M              & 197 $\times$ 384 -- 197 $\times$ 1024                        \\
\textit{BarlowTwins}~\citep{zbontar2021barlow} & ResNet-50             & ImageNet      & 25M                      & 1000                                                         \\
\textit{SwAV}~\citep{caron2020unsupervised}        & RX101-32x16d          & ImageNet      & 25M -- 586M              & 1000 -- 10240                                                \\
\textit{ViT}~\citep{alexey2020image}         & Vit-B/16              & JFT           & 86M                      & 197 $\times$ 768                                             \\
\textit{Multigrain}~\citep{berman2019multigrain}  & ResNet-50             & ImageNet      & 25M                      & 2048                                                         \\
\textit{SimCLRv2}~\citep{chen2020simple}    & ResNet-152x3          & ImageNet      & 25M -- 800M              & 2048 -- 6144                                                 \\
\textit{SSCD}~\citep{pizzi2022self}        & ResNeXt101-32x4       & DISC          & 24M -- 44M               & 512 -- 1024                                                 
\end{tabular}%
}
\caption{Image embedding model metadata. If multiple architectures or checkpoints are evaluated for a given model, the minimum and maximum parameter counts and dimensions are reported along with the largest architecture.}
\label{tab:models}
\end{table}

\begin{figure}[ht]
\centering
\subfigure{
  \includegraphics[height=0.31\textwidth]{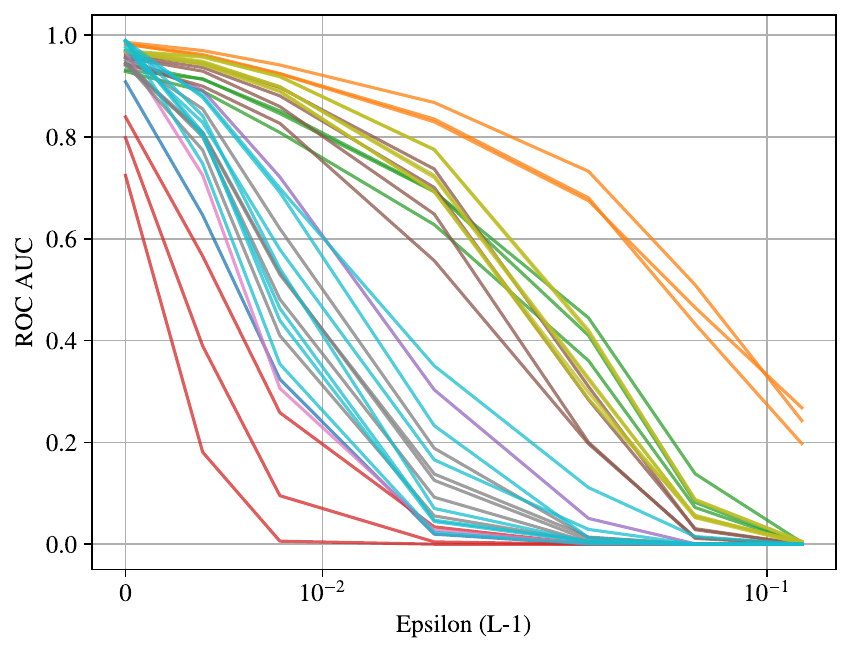}
  \label{fig:roc_auc_vs_epsilon_l1}
}
\subfigure{
  \includegraphics[height=0.31\textwidth]{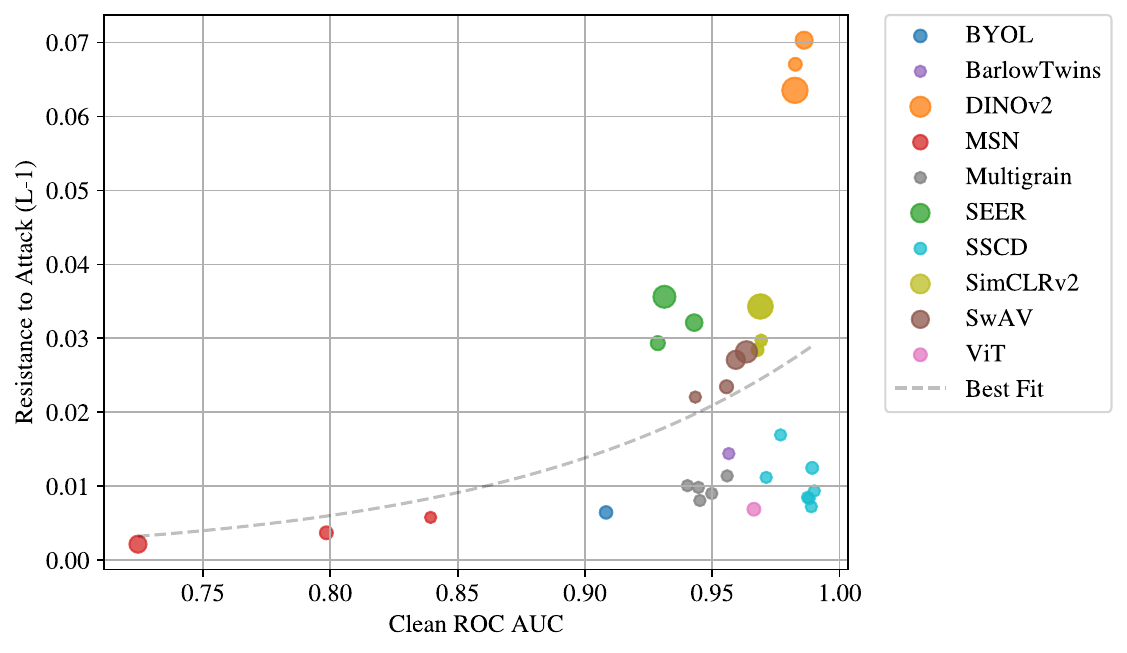}
  \label{fig:clean_roc_auc_vs_area_l1}
}
\caption{Resistance to $\ell_1$ attacks slightly increases with model performance. The y-axis of the right figure is calculated as the area under the corresponding curve in the left figure.}
\label{fig:attack_resistance_l1}
\end{figure}

\clearpage

\subsection{Attacked Imagery}\label{subsec:attacked_imagery}

Recall that our PGD-based adversarial attack seeks to perform the following for a given image triple ($I_a$, $I_{a'}$, $I_b$): for the positive pair $I_a$ and $I_{a'}$ that are visually similar, it aims to force $\Embed(I_a)$ and $\Embed(I_{a'})$ to have a \textit{low} $\ell_2$-normalized dot product.
For the negative pair $I_a$ and $I_b$ that are not visually similar, it aims to force $\Embed(I_a)$ and $\Embed(I_b)$ to have a \textit{high} $\ell_2$-normalized dot product.
We provide examples of attack perturbations against DINOv2 (giant) and SSCD (imagenet-advanced) checkpoints for a select image triple below.
In~\Cref{fig:base} we show the base image $I_a$.
In~Figures 7, 9, 11 and 13, we show examples of attacked $I_{a'}$ images for various epsilon choices.
In~Figures 8, 10, 12 and 14, we show examples of attacked $I_{b}$ images for various epsilon choices.
Note that $I_{a'}$ is a ``cropped and affine-transformed'' version of $I_a$ and $I_b$ is a completely unrelated image.

\begin{figure}[ht]
    \centering
    \includegraphics[scale=0.1]{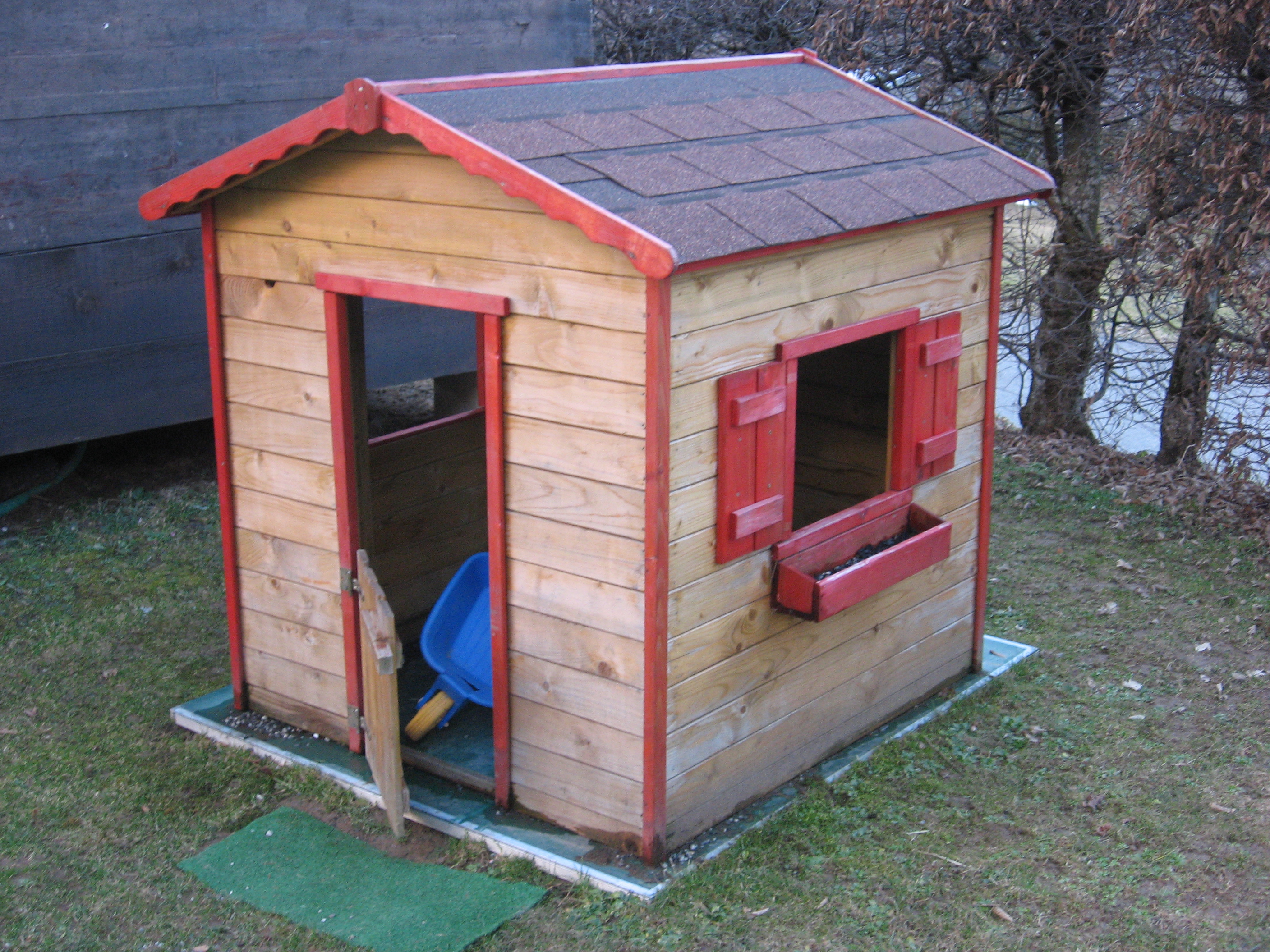}
    \caption{Base image.}
    \label{fig:base}
\end{figure}

\begin{figure}[ht]
  \centering
  \begin{minipage}{0.125\textwidth}
    \includegraphics[scale=0.3]{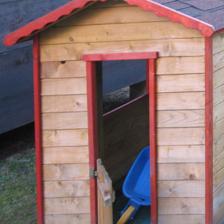}
  \end{minipage}
  \hfill
  \begin{minipage}{0.125\textwidth}
    \includegraphics[scale=0.3]{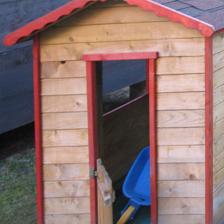}
  \end{minipage}
  \hfill
  \begin{minipage}{0.125\textwidth}
    \includegraphics[scale=0.3]{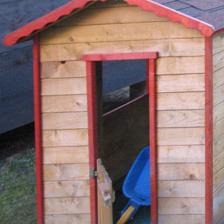}
  \end{minipage}
  \hfill
  \begin{minipage}{0.125\textwidth}
    \includegraphics[scale=0.3]{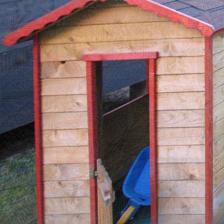}
  \end{minipage}
  \hfill
  \begin{minipage}{0.125\textwidth}
    \includegraphics[scale=0.3]{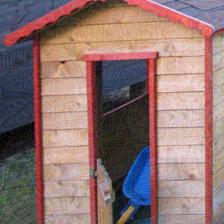}
  \end{minipage}
  \hfill
  \begin{minipage}{0.125\textwidth}
    \includegraphics[scale=0.3]{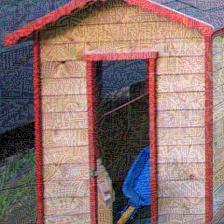}
  \end{minipage}
  \hfill
  \begin{minipage}{0.125\textwidth}
    \includegraphics[scale=0.3]{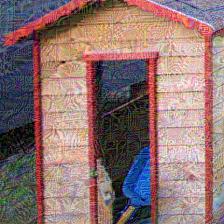}
  \end{minipage}
  \label{fig:dino_linf_positive_examples}
  \caption{$\ell_\infty$-attacked positive image for $\epsilon \in \{0/255, 1/255, 2/255, 4/255, 8/255, 16/255, 32/255\}$ over DINOv2 (giant).}
  \vspace{10pt}
  \begin{minipage}{0.125\textwidth}
    \includegraphics[scale=0.3]{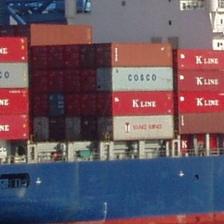}
  \end{minipage}
  \hfill
  \begin{minipage}{0.125\textwidth}
    \includegraphics[scale=0.3]{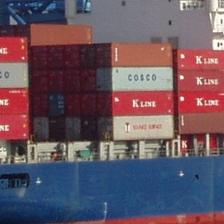}
  \end{minipage}
  \hfill
  \begin{minipage}{0.125\textwidth}
    \includegraphics[scale=0.3]{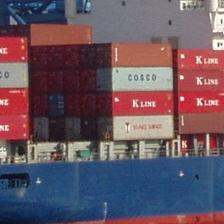}
  \end{minipage}
  \hfill
  \begin{minipage}{0.125\textwidth}
    \includegraphics[scale=0.3]{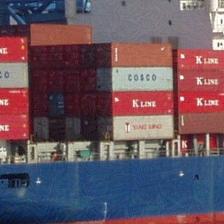}
  \end{minipage}
  \hfill
  \begin{minipage}{0.125\textwidth}
    \includegraphics[scale=0.3]{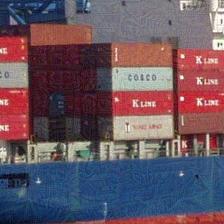}
  \end{minipage}
  \hfill
  \begin{minipage}{0.125\textwidth}
    \includegraphics[scale=0.3]{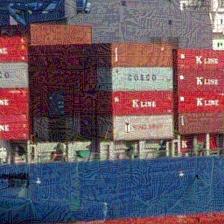}
  \end{minipage}
  \hfill
  \begin{minipage}{0.125\textwidth}
    \includegraphics[scale=0.3]{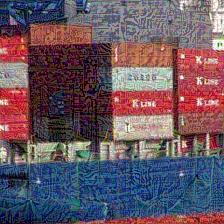}
  \end{minipage}
  \label{fig:dino_linf_negative_examples}
  \caption{$\ell_\infty$-attacked negative image for $\epsilon \in \{0/255, 1/255, 2/255, 4/255, 8/255, 16/255, 32/255\}$ over DINOv2 (giant).}
  \vspace{10pt}
  \begin{minipage}{0.125\textwidth}
    \includegraphics[scale=0.3]{Images/dinov2-giant_eps=0_t.jpg}
  \end{minipage}
  \hfill
  \begin{minipage}{0.125\textwidth}
    \includegraphics[scale=0.3]{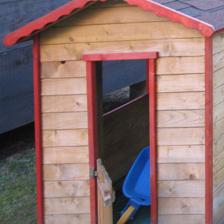}
  \end{minipage}
  \hfill
  \begin{minipage}{0.125\textwidth}
    \includegraphics[scale=0.3]{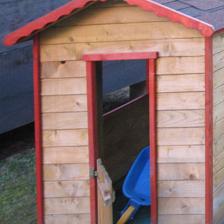}
  \end{minipage}
  \hfill
  \begin{minipage}{0.125\textwidth}
    \includegraphics[scale=0.3]{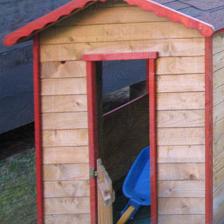}
  \end{minipage}
  \hfill
  \begin{minipage}{0.125\textwidth}
    \includegraphics[scale=0.3]{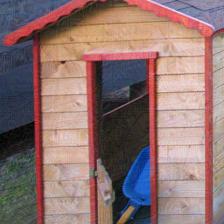}
  \end{minipage}
  \hfill
  \begin{minipage}{0.125\textwidth}
    \includegraphics[scale=0.3]{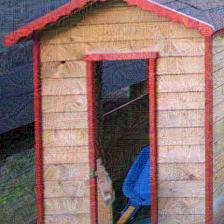}
  \end{minipage}
  \hfill
  \begin{minipage}{0.125\textwidth}
    \includegraphics[scale=0.3]{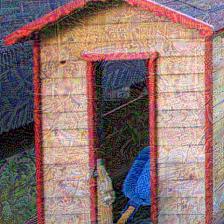}
  \end{minipage}
  \label{fig:dino_l1_positive_examples}
  \caption{$\ell_1$-attacked positive image for $\epsilon \in \{0/255, 1/255, 2/255, 4/255, 8/255, 16/255, 32/255\}$ over DINOv2 (giant).}
  \vspace{10pt}
  \begin{minipage}{0.125\textwidth}
    \includegraphics[scale=0.3]{Images/dinov2-giant_eps=0_f.jpg}
  \end{minipage}
  \hfill
  \begin{minipage}{0.125\textwidth}
    \includegraphics[scale=0.3]{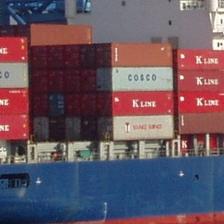}
  \end{minipage}
  \hfill
  \begin{minipage}{0.125\textwidth}
    \includegraphics[scale=0.3]{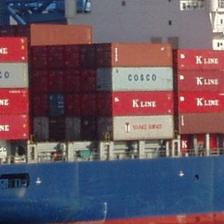}
  \end{minipage}
  \hfill
  \begin{minipage}{0.125\textwidth}
    \includegraphics[scale=0.3]{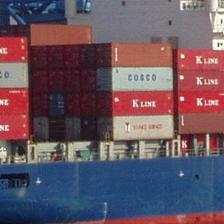}
  \end{minipage}
  \hfill
  \begin{minipage}{0.125\textwidth}
    \includegraphics[scale=0.3]{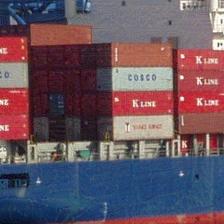}
  \end{minipage}
  \hfill
  \begin{minipage}{0.125\textwidth}
    \includegraphics[scale=0.3]{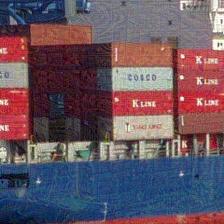}
  \end{minipage}
  \hfill
  \begin{minipage}{0.125\textwidth}
    \includegraphics[scale=0.3]{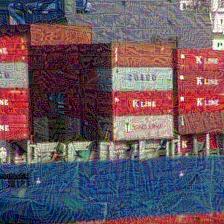}
  \end{minipage}
  \label{fig:dino_l1_negative_examples}
  \caption{$\ell_1$-attacked negative image for $\epsilon \in \{0/255, 1/255, 2/255, 4/255, 8/255, 16/255, 32/255\}$ over DINOv2 (giant).}
\end{figure}

\begin{figure}[ht]
  \centering
  \begin{minipage}{0.125\textwidth}
    \includegraphics[scale=0.235]{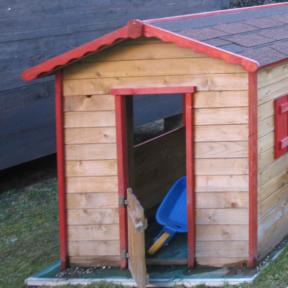}
  \end{minipage}
  \hfill
  \begin{minipage}{0.125\textwidth}
    \includegraphics[scale=0.235]{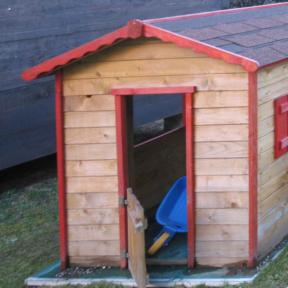}
  \end{minipage}
  \hfill
  \begin{minipage}{0.125\textwidth}
    \includegraphics[scale=0.235]{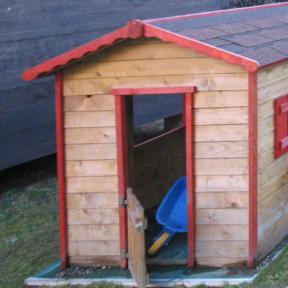}
  \end{minipage}
  \hfill
  \begin{minipage}{0.125\textwidth}
    \includegraphics[scale=0.235]{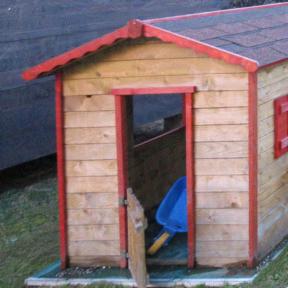}
  \end{minipage}
  \hfill
  \begin{minipage}{0.125\textwidth}
    \includegraphics[scale=0.235]{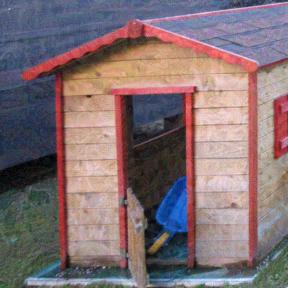}
  \end{minipage}
  \hfill
  \begin{minipage}{0.125\textwidth}
    \includegraphics[scale=0.235]{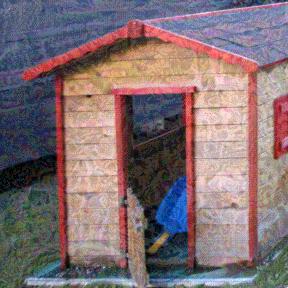}
  \end{minipage}
  \hfill
  \begin{minipage}{0.125\textwidth}
    \includegraphics[scale=0.235]{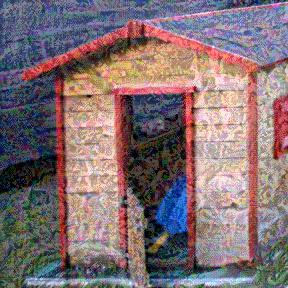}
  \end{minipage}
  \label{fig:sscd_linf_positive_examples}
  \caption{$\ell_\infty$-attacked positive image for $\epsilon \in \{0/255, 1/255, 2/255, 4/255, 8/255, 16/255, 32/255\}$ over SSCD (imagenet-advanced).}
  \vspace{10pt}
  \begin{minipage}{0.125\textwidth}
    \includegraphics[scale=0.235]{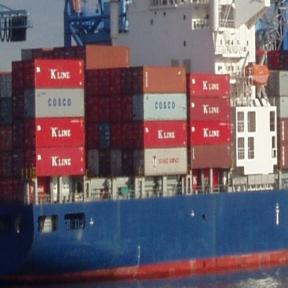}
  \end{minipage}
  \hfill
  \begin{minipage}{0.125\textwidth}
    \includegraphics[scale=0.235]{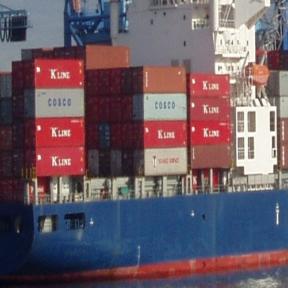}
  \end{minipage}
  \hfill
  \begin{minipage}{0.125\textwidth}
    \includegraphics[scale=0.235]{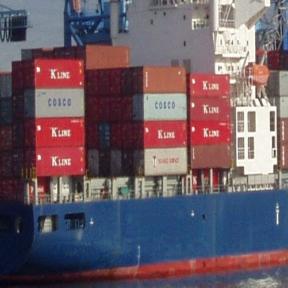}
  \end{minipage}
  \hfill
  \begin{minipage}{0.125\textwidth}
    \includegraphics[scale=0.235]{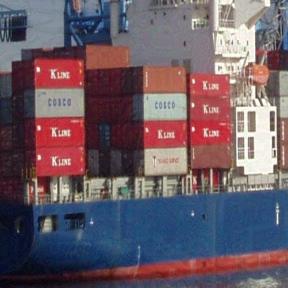}
  \end{minipage}
  \hfill
  \begin{minipage}{0.125\textwidth}
    \includegraphics[scale=0.235]{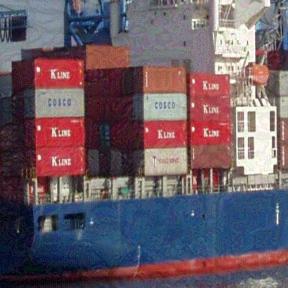}
  \end{minipage}
  \hfill
  \begin{minipage}{0.125\textwidth}
    \includegraphics[scale=0.235]{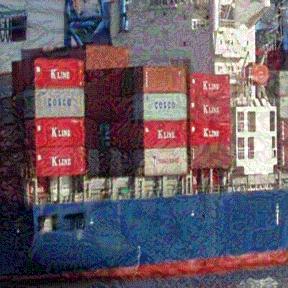}
  \end{minipage}
  \hfill
  \begin{minipage}{0.125\textwidth}
    \includegraphics[scale=0.235]{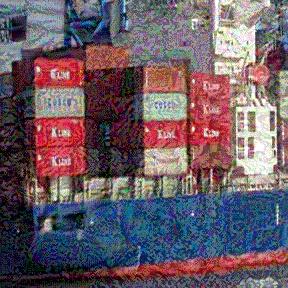}
  \end{minipage}
  \label{fig:sscd_linf_negative_examples}
  \caption{$\ell_\infty$-attacked negative image for $\epsilon \in \{0/255, 1/255, 2/255, 4/255, 8/255, 16/255, 32/255\}$ over SSCD (imagenet-advanced).}
  \vspace{10pt}
  \begin{minipage}{0.125\textwidth}
    \includegraphics[scale=0.235]{Images/sscd-ia_eps=0_t.jpg}
  \end{minipage}
  \hfill
  \begin{minipage}{0.125\textwidth}
    \includegraphics[scale=0.235]{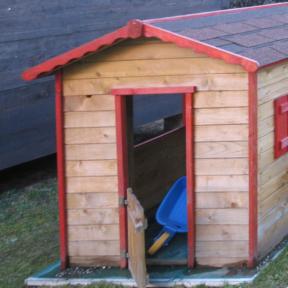}
  \end{minipage}
  \hfill
  \begin{minipage}{0.125\textwidth}
    \includegraphics[scale=0.235]{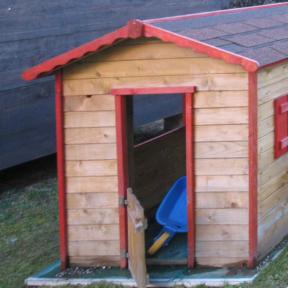}
  \end{minipage}
  \hfill
  \begin{minipage}{0.125\textwidth}
    \includegraphics[scale=0.235]{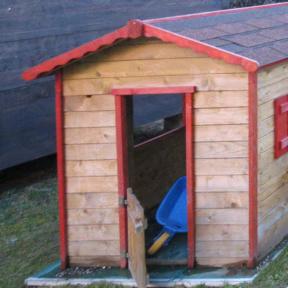}
  \end{minipage}
  \hfill
  \begin{minipage}{0.125\textwidth}
    \includegraphics[scale=0.235]{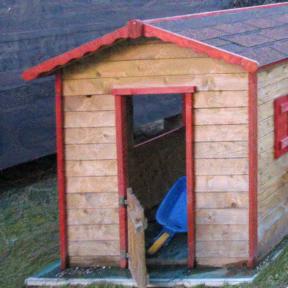}
  \end{minipage}
  \hfill
  \begin{minipage}{0.125\textwidth}
    \includegraphics[scale=0.235]{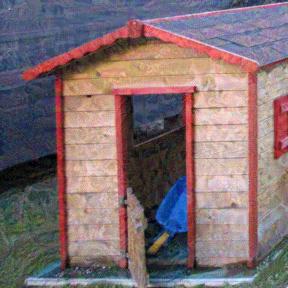}
  \end{minipage}
  \hfill
  \begin{minipage}{0.125\textwidth}
    \includegraphics[scale=0.235]{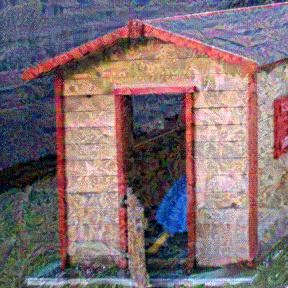}
  \end{minipage}
  \label{fig:sscd_l1_positive_examples}
  \caption{$\ell_1$-attacked positive image for $\epsilon \in \{0/255, 1/255, 2/255, 4/255, 8/255, 16/255, 32/255\}$ over SSCD (imagenet-advanced).}
  \vspace{10pt}
  \begin{minipage}{0.125\textwidth}
    \includegraphics[scale=0.235]{Images/sscd-ia_eps=0_f.jpg}
  \end{minipage}
  \hfill
  \begin{minipage}{0.125\textwidth}
    \includegraphics[scale=0.235]{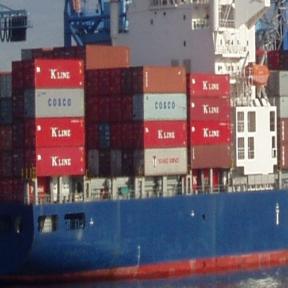}
  \end{minipage}
  \hfill
  \begin{minipage}{0.125\textwidth}
    \includegraphics[scale=0.235]{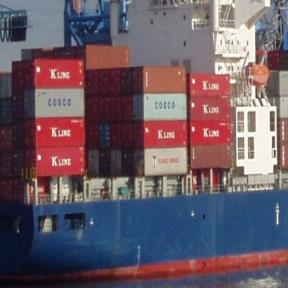}
  \end{minipage}
  \hfill
  \begin{minipage}{0.125\textwidth}
    \includegraphics[scale=0.235]{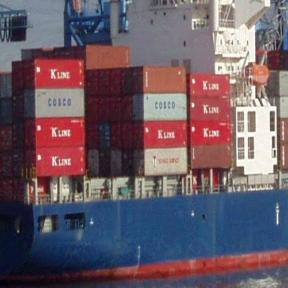}
  \end{minipage}
  \hfill
  \begin{minipage}{0.125\textwidth}
    \includegraphics[scale=0.235]{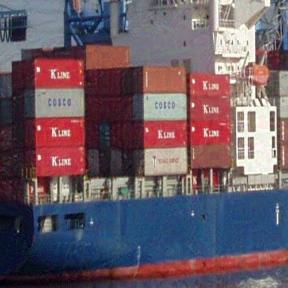}
  \end{minipage}
  \hfill
  \begin{minipage}{0.125\textwidth}
    \includegraphics[scale=0.235]{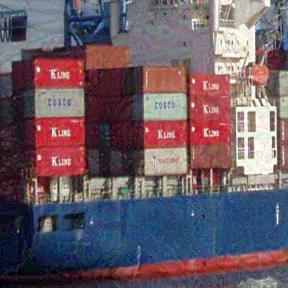}
  \end{minipage}
  \hfill
  \begin{minipage}{0.125\textwidth}
    \includegraphics[scale=0.235]{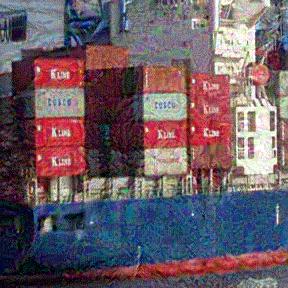}
  \end{minipage}
  \label{fig:sscd_l1_negative_examples}
  \caption{$\ell_1$-attacked negative image for $\epsilon \in \{0/255, 1/255, 2/255, 4/255, 8/255, 16/255, 32/255\}$ over SSCD (imagenet-advanced).}
\end{figure}

\end{document}